\theoremstyle{plain} 
\newtheorem{theorem}{Theorem} 
\newtheorem{prop}{Proposition}
\theoremstyle{remark}
\newcommand{\R}{\mathbb{R}}
\begin{document}
\title{ 
		 An SIR--like kinetic model tracking individuals' viral load
     }
 
 \author{\Large Rossella Della Marca{$^{1}$}, Nadia Loy{$^{2}$}, Andrea Tosin{$^{2*}$}\\[1em]
 	$^1$Risk Analysis and Genomic Epidemiology Unit,\\ Istituto Zooprofilattico Sperimentale della Lombardia e dell'Emilia Romagna,  Parma, Italy\\ rossella.dellamarca@izsler.it 
 	\\[0.5em]
 	$^2$Department of Mathematical Sciences ``G. L. Lagrange'' \\ Politecnico di Torino, Torino, Italy\\nadia.loy@polito.it\\andrea.tosin@polito.it (*corresponding author)}

\date{\today}
\maketitle
\begin{abstract}
Mathematical models are formal and simplified representations of the knowledge related to a phenomenon. 
In classical epidemic models, a neglected aspect is the heterogeneity of disease transmission and
progression linked to the viral load of each infectious individual.
Here, we  attempt to investigate the interplay between the evolution of individuals' viral load and the epidemic dynamics from a theoretical point of view.
In the framework of multi--agents systems, we propose a particle stochastic model describing the infection transmission  trough interactions among agents and the individual physiological course of the disease. 
Agents have a double microscopic state: a discrete label, that denotes the epidemiological compartment to which they belong and switches in consequence of a Markovian process, and a microscopic trait, representing a normalized measure of their viral load, that changes in consequence of binary interactions or interactions with a background. Specifically, we consider Susceptible--Infected--Removed--like dynamics where infectious individuals may be isolated from the general population and the isolation rate may depend on the viral load sensitivity and frequency of tests. 
We derive kinetic evolution equations for the distribution functions of the viral load of the individuals in each compartment, whence, via suitable upscaling procedures, we obtain a macroscopic model for the densities and viral load momentum. We perform then a qualitative analysis of the ensuing macroscopic model, and we present numerical tests in the case of both constant and viral load--dependent isolation control. Also, the matching between the aggregate trends obtained from the macroscopic descriptions and the original particle dynamics simulated by a Monte Carlo approach is investigated.
\end{abstract}
\smallskip

\noindent{\bf Keywords:} Boltzmann--type equations, Markov--type jump processes, epidemic, SIR model, basic reproduction number, viral load, qualitative analysis

\smallskip

\noindent{\bf Mathematics Subject Classification:} 35Q20, 35Q70, 35Q84, 37N25

\section{Introduction}
Mathematical models of infectious diseases spreading have played a significant role in
infection control. On the one hand, they have given an important contribution to the
biological and epidemiological understanding of disease outbreak patterns; on the other hand,
they have helped to determine how and when to apply control measures in order to quickly
and most effectively contain epidemics \cite{Brauer}. Research in this field is constantly
evolving and ever new challenges are launched from the real world (just think of the ongoing
COVID--19 pandemic). One among the many increasingly attractive topics is the mutual influence between the individual behaviours and choices and the disease dynamics \cite{BBRDMCovid,spva}.

In mathematical epidemiology literature a prominent position is occupied by the \textit{compartmental} epidemic models. They are macroscopic 
models where  the total population is divided into disjoint \textit{compartments} according to the individual status with respect to the disease, and the switches from a compartment to another follow given  transition rules. The size of each compartment represents a state variable of the model, whose rate of change is ruled by a \textit{balance} differential equation.
The milestone of compartmental models is the  well--known deterministic Susceptible--Infected--Removed  (SIR) model, proposed by  Kermack and McKendrick  in 1927 \cite{SIR}. 

Like any mathematical model, also epidemic models postulate some simplifying assumptions that are needed to make them analytically tractable and/or numerically solvable. Quantifying the impact of such simplifications is extremely important to understand the model reliability  and identify its range of application.
For example, deterministic compartmental models are valid for large populations. Hence, they can hardly describe situations in which compartments are almost empty (for example at the onset of an epidemic, when the infectious individuals are very few) and, then, stochastic fluctuations cannot be disregarded.

A significant aspect neglected by classical epidemic models is the heterogeneity of disease transmission and progression linked to the \textit{viral load} of each infectious individual. 
Viral load is defined as a quantitative viral titre (e.g. copy number) \cite{ecdc} and may represent a useful marker for assessing viral kinetics, disease severity and prognosis. Indeed,   symptoms and mortality induced by the infection may depend on the  individual viral load, like asserted, for example, by studies on  seasonal flu  \cite{flu}, measles \cite{Simmonds} and  COVID--19 disease \cite{Fajnzylber2020}.  
 The quantity of virus in the organism can also influence the results by screening and diagnostic tests, which  are capable of detecting a different quantity of virus per $ml$ according to their sensitivity. Hence, the viral load affects the probability for an individual of being diagnosed and, consequently, home isolated or hospitalized, thus preventing the possibility of him/her infecting other people. In this context, assessing the  interplay between the frequency of testing and sensitivity of the tests \cite{larremore2021SA} is crucial for planning  prevention and mitigation measures.
  Also the timing of testing is fundamental: for example in the case of acute rubella, in order to have laboratory confirmation of infection, viral specimen should be collected as soon after symptom onset as possible, preferably one to three days after onset, but no later than seven days post--onset \cite{cdcRubella}.
Last but not least, 
the viral load can be a strong determinant of transmission risk \cite{Goyal2020} and the knowledge of the duration of viral shedding plays a key role in tracing the evolution of the infectious disease 
\cite{cevik2020virology}. 
For example, it is estimated that SARS--CoV--2 viral load  peaks just before the symptom onset, i.e. during the \textit{pre--symptomatic} stage of infection \cite{He2020,ecdc} and  pre--symptomatic patients are responsible of about $44\%$ of secondary infections  \cite{He2020}. In the case of congenital rubella syndrome, infants can shed the virus up to one year, but samples should be collected prior to three months of age because by three months of age approximately 50$\%$ will no longer shed virus \cite{cdcRubella}.  
 
The mathematical framework of multi--agent systems \cite{pareschi2013BOOK} allows one to introduce a detailed microscopic description of the interactions between individuals,
that are generally called \textit{agents}, within a population. One of the key aspects is that it allows one to recover a statistical description of the system by introducing a probability density function accounting for the statistical distribution of some microscopic traits of the individuals. Its evolution may be described by \textit{kinetic equations} that also permit to derive hydrodynamic equations, i.e. macroscopic models, that, thus, inherit a large number of features of the original microscopic dynamics. In particular, the authors in \cite{loy2020KRM} introduced a particle model describing a microscopic dynamics in which agents have a double microscopic state: a discrete label that switches as a consequence of a Markovian process and a microscopic trait that changes as a consequence of \textit{binary interactions} or \textit{interactions with a background}. The trait may take into account the individual viral load, while the label denotes the compartment to which the agent belongs. The authors then derived nonconservative kinetic equations describing the evolution of the distribution of the microscopic trait for each label and, eventually, hydrodynamic equations for the densities and momentum of the microscopic trait of each compartment.

Kinetic equations have been applied to compartmental epidemic models in order to take into account the role of wealth distribution during the spread of infectious diseases, for example in \cite{dimarco2020PRE, dimarco2021PREPRINT}.  In these works, the authors described in more detail social contacts among the individuals but still relied on a SIR--like structure to model contagion dynamics. To our best knowledge, the only kinetic model taking into account the spread of an infectious disease by  means of interactions and including the individuals' viral load is the one proposed in \cite{loyTosin2021PREPRINT}, where, however, the authors did not consider epidemiological compartments.

Motivated by the previous arguments, in the present work we  propose a microscopic stochastic model allowing one to describe the spread of an infectious disease as a consequence of the interactions among individuals who are characterized by means of their viral load. Once infected, the viral load of the individuals increases up to a maximum peak and then decreases as a consequence of a physiological process. Furthermore, the individuals are labelled in order to indicate their belonging to one of the disjoint epidemiological compartments. Specifically, we consider an SIR--like dynamics with an isolation mechanism that depends on the individual viral load (Section \ref{sec:2}). We  derive kinetic evolution equations for the distribution functions of the viral load of the individuals in each compartment and, eventually, a macroscopic model for the densities and viral load momentum  (Section \ref{sec:3}). We  perform then a qualitative analysis of the ensuing macroscopic model (Section \ref{sec:4}), and we present some numerical tests of both the microscopic and macroscopic models  to show the matching between the aggregate trends obtained from
the macroscopic descriptions and the original particle dynamics simulated by a Monte Carlo approach (Section \ref{sec:5}). Finally, we draw some conclusions and we briefly sketch possible research developments (Section \ref{sec:6}).

\section{ A multi--agent system describing the spread of an epidemics through interactions}\label{sec:2}

Let us consider a large system of interacting individuals in presence of an infectious disease that spreads  through social contacts. 
The total population at time $t$ is  divided into disjoint epidemiological compartments according to the health status with respect to the disease, to each of which we associate a label $x \in \mathcal{X}$.
 Individuals, that we shall also call the \textit{agents}, are characterized  by the  evolution stage of the disease--related viral load, that is the number of viral particles present in the organism. Let us denote with  $v\in[0,1]$ a normalized measure of the individual viral load at time $t$, where $v=1$ represents the  maximum observable value.
We want to describe the microscopic mechanisms modelling the interactions between individuals,  which are the means of  the transmission of the disease, and the switch of compartment of each individual that follows from the disease progression.

Being our final aim  the proposal of a macroscopic model, we  derive as an intermediate stage a statistical description of our multi--agent system through kinetic equations, by which we  then recover hydrodynamic limits. In order to give a statistical description of the multi--agent system, whose total mass is conserved in time, we introduce a distribution function for describing the statistical distribution of the agents characterized by the pair $(x,v) \in \mathcal{X}\times[0,1]$, as
\begin{equation}\label{eq:f.delta.label_switch_net}
f(t,x, v)=\sum_{i=1}^n \delta(x-i)  f_{i}(t,v),
\end{equation}
where $\delta(x-i) $  is the Dirac delta distribution centred at $x=i$,
and we  assume that $f(t,x,v)$ is a probability distribution, namely:
\begin{equation}
	\int_{0}^1\int_\mathcal{X} f(t,x,v)dxdv=\sum_{i=1}^{n}\int_{0}^1 f_i(t,v)dv=1, \quad \forall\,t\ge0. 
	\label{eq:f.prob}
\end{equation}
In (\ref{eq:f.delta.label_switch_net})--(\ref{eq:f.prob}), $f_i=f_i(t,v)\geq 0$ is the distribution function of the microscopic state $v$ of the agents that are in the $i$--th compartment at time $t$. Hence, $f_i(t,v)dv$ is the proportion of agents in the compartment $i$, whose microscopic state lies between $v$ and $v+dv$ at time $t$.
In general, the  $f_i$'s, $i \in \mathcal{X}$, are  not probability density functions because their $v$--integral varies in time due to the fact that agents move from one compartment to another. We denote by
\begin{equation*}
	\rho_i(t):=\int_{0}^1 f_i(t,v)dv
\end{equation*}
the density of agents in the class $i$, thus $0\leq\rho_i(t)\leq 1$ and
$$ \sum_{i=1}^{n}\rho_i(t)=1, \quad \forall\,t\ge 0. $$
Then, we define the \textit{viral load momentum} of the $i$th compartment as the first moment of $f_i$ for each class $i \in \mathcal{X}$, i.e. 
\begin{equation*}
n_i(t)=\int_{0}^1 f_i(t,v)vdv.
\end{equation*}
If $\rho_i(t)>0$ we can also define the \textit{mean viral load} as the ratio
${n_i(t)}/{\rho_i(t)}.
$
We observe that $\rho_i(t)=0$ implies instead necessarily  $f_i(t,v)=0$ and, therefore, $n_i(t)=0$. In such a case, the mean viral load is not defined because the corresponding  compartment is empty.

\subsection{The compartmental structure}
The individuals, labelled with $x\in \mathcal{X}$, are divided in the following disjoint epidemiological  compartments:
\begin{itemize}
\item \textit{susceptible}, $x=S$:  individuals who are healthy but can contract the disease. The susceptible population
increases by a net inflow, incorporating  new births and immigration, and decreases due to disease transmission and natural death;
\item \textit{infectious}, $I$: individuals who  are infected by the disease and can transmit the virus to others. We assume that members of this class are asymptomatic or mildly symptomatic, hence they move freely. Infectious individuals arise as the result of new infections of susceptible individuals and diminish due to recovery  and natural death or because they are identified and isolated from the general population;
\item\textit{isolated}, $H$:  infected individuals who have been identified and fully isolated from the general population by home isolation or hospitalization. Members of this class come from the infectious compartment $I$ and get out due to recovery or death. We assume that this class includes patients showing severe symptoms, that can also die due to the disease; 
\item  \textit{recovered}, $x=R$:  individuals who are recovered from the disease after the infectious period. They come from the infected compartments $I$ and $H$
and acquire long lasting immunity against the disease.
\end{itemize}
 Specifically: susceptible individuals have $v\equiv0$; once infected, an individual  viral load increases until reaching a peak value (that varies from person to person) and then gradually decreases, see e.g. the  representative plot of SARS--CoV--2 viral load evolution given in \cite{cevik2020virology}, Fig. 2. Hence, for mathematical convenience, we assume that members of classes $I$ and $H$ are further divided into:
\begin{itemize}
	\item  infectious, $x=I_1$, and isolated, $x=H_1$,  with increasing viral load;
	\item infectious, $x=I_2$, and isolated, $x=H_2$,  with decreasing viral load.
\end{itemize}
Note that new infections enter the class $I_1$, while recovery may occur only during the stages $I_2$ or $H_2$. Finally, after the infectious period, recovered individuals may still have a positive viral load  which however definitively  approaches  zero.

Also, since our model incorporates \textit{birth} and \textit{death} processes, we  introduce the following three auxiliary compartments:  individuals that enter the susceptible class by newborn or immigration, $x=B$; individuals who die of natural causes, $x=D_\mu$; and individuals who die from the disease, $x=D_d$.  We assume that members of class $B$ have $v\equiv 0$, while those of classes $D_\mu$ and $D_d$ retain the viral load value at time they died.

\subsection{Evolution of the viral  load}\label{Sec:viral-load}
Let us now focus on the mathematical modelling of the evolution of an individual viral  load $v$. 
We distinguish  the two following cases when   $v$ changes over time: i) a susceptible individual,  having $v=0$, acquires a  positive viral load (and get infected) by interaction with an infectious individual; ii)  the viral  loads of infected ($I_1$, $I_2$, $H_1$, $H_2$) and recovered ($R$) individuals evolve naturally in virtue of physiological processes. 

Given an agent labelled with $S$, then the necessary condition for acquiring positive viral load is an encounter with an infectious individual ($I_1$ or $I_2$).  Let us denote with $\lambda_\beta>0$ the frequency of these interactions. Increasing [resp. decreasing] $\lambda_\beta$ corresponds to increasing [resp. reducing] encounters among people: the lower $\lambda_\beta$  the more strengthened social distancing.  

By interacting with an infectious individual, a susceptible individual may or may not get infected. In the first case his/her  viral load after the interaction (say, $v'$) is positive: $v'>0$; in the second case it remains null: $v'=0$. Specifically, we consider the following microscopic rule:  
\begin{equation*}
v'=X_{\nu_\beta}v_0,
\end{equation*}
where $X_{\nu_\beta}$ is a Bernoulli random variable of parameter $\nu_\beta\in (0,1)$  describing the case of successful contagion when $X_{\nu_\beta}=1$ and the case of contact without contagion when $X_{\nu_\beta}=0$. 
We assume that new infected individuals enter the  class $I_1$ and they all acquire the same initial viral load, $v_0$ (that can be interpreted as an average initial value). 
We remark that this binary interaction process causes simultaneously a change of the microscopic state $v$ and a label switch, because as soon as $v$ becomes positive, i.e. if $X_{\nu_\beta}=1$, the susceptible individual  switches to the class $I_1$. 

Infectious, isolated and recovered individuals cannot change their viral  load in binary interactions, but the evolution reflects  physiological processes. In particular, starting from the initial positive value $v=v_0$, the viral load increases until reaching a given peak value and then it decreases towards zero. The peak can be reached in either the stage $I_1$ or $H_1$, i.e. before or after an infectious individual is possibly isolated.

In this framework, the microscopic state $v$  varies as a consequence of an autonomous process (also called \textit{interaction with a fixed background} in the jargon of multi--agent systems \cite{pareschi2013BOOK}). 
Specifically,
given an agent $(I_1,v)$ or $(H_1,v)$, namely an infected individual with increasing viral load, we consider a linear--affine expression for the microscopic rule describing the evolution of $v$ into a new viral load $v'$:
\begin{equation}\label{eq:micro_pre.max}
v'=v+\nu_1(1-v). 
\end{equation}
The latter is a prototype rule describing the fact that the viral load may increase up to a certain threshold normalized to 1 by a factor proportional to $(1-v)$. In particular, $\nu_1\in (0,1)$ is the factor of increase of the viral load.

Similarly, given an agent $(I_2,v)$, $(H_2,v)$ or $(R,v)$, namely an infected individual with decreasing viral load or a recovered individual, we consider the following microscopic rule for the evolution of $v$:
\begin{equation}\label{eq:micro_post.max}
v'=v-\nu_2 v,
\end{equation}
being the parameter $\nu_2\in(0,1)$ the factor of decay of the viral  load. These microscopic processes happen with frequency $\lambda_\gamma>0$.
We observe here that the introduction of the sub--classes $I_1,\,I_2$ and $H_1,\,H_2$ is needed in order to implement the microscopic rules  \eqref{eq:micro_pre.max}--\eqref{eq:micro_post.max} in a kinetic equation. 
These two rules are deliberately generic and very simple: the only aim is to distinguish individuals based on whether their viral load is increasing or decreasing and to implement two different factors $\nu_1$ and $\nu_2$ accordingly.

\subsection{A microscopic stochastic model}\label{Sec:micro}
Let us now define a microscopic stochastic process implementing the modelling assumptions defined so far. Let us  consider an agent  characterized by the pair of random variables $(X_t,V_t)$, where $X_t \in \mathcal{X}$ is the label denoting the compartment to which the agent belongs and $V_t\in [0,1]$ is the  viral load. Then, the random variable $X_t$ changes in consequence of a Markovian jump process, while the microscopic state $V_t$ may change either because of a binary interaction with an agent characterized by $(Y_t,W_t)$, $Y_t \in \mathcal{X}$, $W_t \in [0,1]$ or because of an autonomous process according to the discussion in Section \ref{Sec:viral-load}. In particular, two different types of stochastic processes may happen:
\begin{enumerate}
\item the agents may change their label according to a process that is independent of the change of the viral load: birth and death processes, $(I_1\rightarrow H_1)$, $(I_2\rightarrow H_2)$, where the notation $(j\rightarrow i)$ indicates the switch from compartment $j$ to compartment $i$;
\item the agents may change their viral load and their label simultaneously: $(S\rightarrow I_1)$, $(I_1\rightarrow I_2)$, $(I_2\rightarrow R)$, $(H_1\rightarrow H_2)$, $(H_2\rightarrow R)$. This class of processes also includes the evolution of the viral load of individuals who remain in the same compartment, i.e. $(i\rightarrow i)$, $\forall i \in \mathcal{X}$. 
\end{enumerate} 
 The two stochastic processes above may be expressed in the following rule describing the variation of $X_t$ and $V_t$ of a generic representative agent of the system during a time interval $\Delta t>0$:
\begin{equation}\label{stoch_proc}
\begin{aligned}
(X_{t+\Delta t},V_{t+\Delta t})=& \Sigma\left[(1-\Theta)(X_t,V_t)+\Theta (J^v_{X_t},V^{'}_{X_t})\right]+\Psi\Big[ ((1-\Xi)X_t,V_t)+(\Xi J_{X_t},V_{t})\Big],
\end{aligned} 
\end{equation}
where $\Sigma$ and $\Psi$ are indicator functions. In particular, $\Sigma=1$ if the agents of compartment labelled with $X_t$ change their viral load and label simultaneously ($\Sigma=1$ if $X_t \in \lbrace I_1,I_2\rbrace$) and $\Psi=1$ if the label of the agents in compartment $X_t$ changes independently of the viral load (if $X_t \in \lbrace S,I_1,I_2, H_1, H_2,R\rbrace$). $J_{X_t}$ is the new label of an agent performing a label switch independently on the viral load and previously labelled with $X_t$. Moreover, $V^{'}_{X_t}$, $J^v_{X_t}$ are the new viral load and label of an agent with previous state $(X_t,V_t)$ in the case of a process in which the viral load and the label change simultaneously. Furthermore, we assume that $\Theta$ and $\Xi$ are two independent Bernoulli random variables describing whether a process happens ($\Theta=1,\,\Xi=1$) or not ($\Theta=0,\,\Xi=0$). We suppose that $P(\Theta=1)=\lambda_{X_t} \Delta t$,  being $\lambda_{X_t}$ the frequency of the microscopic process that rules the change of the microscopic variable $v$, while $P(\Xi=1)=\lambda_{X_t,J_{X_t}}\Delta t$, where $\lambda_{X_t,J_{X_t}}$ is the frequency of the transition that causes the independent label switch from $X_t$ to $J_{X_t}$. In order for $P$ to be well defined we must have that $\lambda_{X_t}\Delta t,\, \lambda_{X_t,J_{X_t}}\Delta t \le1$. The latter models the assumption according to which the larger the time interval, the higher the
probability of having a label switch. 

\paragraph{Independent label switch}
Let us denote with $P(j\rightarrow i):=P(J_{X_t}=i|X_t=j)$ the conditional probability of switching from compartment $j$ to compartment $i$, with $i,\,j\in\mathcal{X}$, independently of a change of the microscopic state $v$. This probability concerns birth and death processes, and the isolation of infectious individuals, i.e. the label switches $(I_1\rightarrow H_1)$, $(I_2\rightarrow H_2)$. Specifically, we consider the following non--zero values for $P$:
\begin{itemize}
\item $P(B\rightarrow S)={b}/{\rho_B(t)}\in[0,1]$, where $b$ is a non--negative constant; 
\item $P(S \rightarrow D_\mu)=P(I_1\rightarrow D_\mu)=P(I_2\rightarrow D_\mu)=P(H_1\rightarrow D_\mu)=P(H_2\rightarrow D_\mu)=\mu\in[0,1]$;
\item $P(H_1 \rightarrow D_d)=P(H_2 \rightarrow D_d)=d\in[0,1]$; 
\item $P(I_1 \rightarrow H_1)=P(I_2 \rightarrow H_2)=\alpha_H(v) \in [0,1]$, where $\alpha_H(v)$ is an increasing function of the viral  load $v$. It accounts for the fact that infectious people with higher viral load are more likely to be identified. Indeed,  performances of screening and diagnostic tests increase with the actual number of viral particles in the organism (see e.g. the interim guidance \cite{whotest} on diagnostic testing for SARS--CoV--2).  Moreover, for some infectious diseases a higher viral load is positively associated with a worse outcome and symptomatology (like for seasonal flu \cite{flu}).
\end{itemize} 
The frequencies of the Markovian processes describing the switch between the different compartments may, in general, depend on both the departure and the arrival classes. It means that the process of switching from class $j$ to class $i$, that happens with probability $P(J_{X_t}=i|X_t=j)$ happens with a frequency $\lambda_{X_t,J_{X_t}}=\lambda_{i,j}$. In particular, we consider
\begin{itemize}
\item $\lambda_{S,B}=\lambda_b$, that is the frequency of new births or immigration;
\item $\lambda_{D_\mu,j}=\lambda_\mu, \,  j\in \mathcal{X}\setminus \{B,D_\mu,D_d\}$, that is the frequency of natural deaths;
\item $\lambda_{D_d,H_1}=\lambda_{D_d,H_2}=\lambda_d$, that is the frequency of disease--induced deaths;
\item $\lambda_{H_1,I_1}$, $\lambda_{H_2,I_2}$,  that are the frequencies at which infectious individuals are isolated. 
\end{itemize}

\paragraph{Simultaneous label switch}
In our multi--agent system  the first microscopic process causing simultaneously a label switch and a progression of the viral load is the transition from susceptible ($S$) to infectious ($I_1$)  state. This process has frequency $\lambda_{\beta}>0$.
We express the corresponding transition probability as
$$
P(J_{X_t}^v=I_1,V'_{X_t}=v'|X_t=S)
$$
that is the probability for an agent labelled with $X_t=S$ to change his/her label and zero viral load into $(I_1,v')$. Since this happens if a susceptible individual meets an infectious individual,  we may regard $P(J_{X_t}^v=I_1,V'_{X_t}=v'|X_t=S)$ as a probability density distribution of the joint random variables $J_{X_t}^v$ and $V^{'}_{X_t}$, given the probability $\rho_{I_1}(t)+\rho_{I_2}(t)$ of encountering an infectious individual, i.e.
\[
P(J_{X_t}^v=I_1,V'_{X_t}=v'|X_t=S)=P(J^v_{X_t}=I_1,V'_{X_t}=v')(\rho_{I_1}(t)+\rho_{I_2}(t)),
\]
that can be rewritten as
\[
P(J_{X_t}^v=I_1,V'_{X_t}=v'|X_t=S)=P(J^v_{X_t}=I_1|V'_{X_t}=v')P(V^{'}_{X_t}=v')(\rho_{I_1}(t)+\rho_{I_2}(t)),
\]
where $P(J^v_{X_t}=I_1|V'_{X_t}=v')=1$ if $v>0$, and $P(J^v_{X_t}=I_1|V'_{X_t}=v')=0$ if $v=0$. $P(V^{'}_{X_t}=v')$ is the probability density distribution of the random variable $V^{'}_{X_t}=X_{\nu_\beta} v_0$  and it takes into account the microscopic rule describing the change of the state $v$ in terms of transition probabilities (see \cite{loy2020CMS} for more details). It may be also expressed as $P(V^{'}_{X_t}=v')=\nu_\beta P_S(v')$ where $P_S=\delta(v'-v_0)$. 

Analogously, we may express the transition probabilities concerning the autonomous process and label switch as
\[
P(J^v_{X_t}=i,V'_{X_t}=v'|X_t=j,V_t=v)=P(J^v_{X_t}=i|V^{'}_{X_t}=v')P_j(v\rightarrow v'),
\]
where $P(J^v_{X_t}=i|V^{'}_{X_t}=v')$ is the probability density distribution of having an agent labelled with $i$ given that he/she has a viral  load $v'$, while $P_j(v\rightarrow v')$ is the transition probability describing the autonomous process of the viral  load $v'$, given the previous viral  load $v$, for agents labelled with $j$. 

\textbf{Remark.} If $v'$ is such that $P(J^v_{X_t}=I_1|V'_{X_t}=v')=1$ and $i=j$, then $P(J^v_{X_t}=i,V'_{X_t}=v'|X_t=j,V_t=v)=P_j(v\rightarrow v')$ is the transition probability that describes the change of the microscopic state $v'$ alone according to the rules \eqref{eq:micro_pre.max}--\eqref{eq:micro_post.max} (see \cite{loy2020CMS}). 

\medskip

In our case, the transitions  to take into account are:
\begin{itemize}
\item $P(J_{X_t}^v=I_2,V^{'}_{X_t}=v'|X_t=I_1,V_t=v)=P(J_{X_t}^v=I_2|V^{'}_{X_t}=v')P_{I_1}(v\rightarrow v')$ and $P(J_{X_t}^v=H_2,V^{'}_{X_t}=v'|X_t=H_1,V_t=v)=P(J_{X_t}^v=H_2|V^{'}_{X_t}=v')P_{H_1}(v\rightarrow v')$, where $P(J_{X_t}^v=I_2|V^{'}_{X_t}=v')=P(J_{X_t}^v=H_2|V^{'}_{X_t}=v')=\eta(v')$. In principle, the probability $\eta(v')$  should  increase by increasing the viral load $v'$, since individuals with higher viral load are more likely to have reached the peak value.
Here, for mathematical convenience, we approximate $\eta$ to the factor of viral load increase: $\eta=\nu_1\in[0,1]$. 
Both $P_{I_1}(v\rightarrow v')$ and $P_{H_1}(v\rightarrow v')$ have average $v+\nu_1(1-v)$. In particular, we  choose $P_{I_1}(v\rightarrow v')=P_{H_1}(v\rightarrow v')=\delta \big( v'- (v+\nu_1(1-v))\big)$.  
 \item
 $P(J_{X_t}^v=R,V^{'}_{X_t}=v'|X_t=I_2,V_t=v)=P(J_{X_t}^v=R|V^{'}_{X_t}=v')P_{I_2}(v\rightarrow v')$ and $P(J_{X_t}^v=R,V^{'}_{X_t}=v'|X_t=H_2,V_t=v)=P(J_{X_t}^v=R,V^{'}_{X_t}=v')=P(J_{X_t}^v=R|V^{'}_{X_t}=v')P_{H_2}(v\rightarrow v')$, where $P(J_{X_t}^v=R|V^{'}_{X_t}=v')=\gamma(v')$  describes the probability for an infected individual of recovering. In principle, the probability $\gamma(v')$  should  increase by decreasing the viral load $v'$, since individuals with lower viral load are more likely to have passed the infectious period. Similarly to what done for $\eta(v')$, we approximate this probability to the factor of viral load decay: $\gamma=\nu_2\in[0,1]$. 
$P_{I_2}(v\rightarrow v')$ and $P_{H_2}(v\rightarrow v')$ have average $v-\nu_2 v$. In particular, we  choose $P_{I_2}(v\rightarrow v')=P_{H_2}(v\rightarrow v')=\delta\big(v'-v(1-\nu_2)\big)$. 
 
\item $P(J_{X_t}^v=I_1,V^{'}_{X_t}=v'|X_t=I_1,V_t=v)=P_{I_1}(v\rightarrow v')$, $P(J_{X_t}^v=H_1,V^{'}_{X_t}=v'|X_t=H_1,V_t=v)=P_{H_1}(v\rightarrow v')$, while $P(J_{X_t}^v=I_2,V^{'}_{X_t}=v'|X_t=I_2,V_t=v)=P_{I_2}(v\rightarrow v')$, $P(J_{X_t}^v=H_2,V^{'}_{X_t}=v'|X_t=H_2,V_t=v)=P_{H_2}(v\rightarrow v')$.

\item $P(J_{X_t}^v=R,V'_{X_t}=v'|X_t=R,V_t=v)=P_{R}(v\rightarrow v')=\delta\big(v'-v(1-\nu_2)\big)$.
\end{itemize}
The frequency of these transitions is the frequency of the corresponding microscopic process, i.e. $\lambda_{X_t}=\lambda_{\gamma}$ for $X_t \in \lbrace I_1,I_2,H_1,H_2,R \rbrace$. 

\section{Aggregate description: from kinetic to hydrodynamic equations}
\label{sec:3}
The kinetic equations equations describing the evolution of $f_i(t,v),\, i \in \mathcal{X}$, can be derived in the same way as in \cite{LnTa_NonCons}. Namely, 
 the system of the weak equations for the $f_i$'s is the following:
\begin{align}
	\begin{aligned}[b]
		\dfrac{d}{dt}\int_{V}\varphi(v)f_i(t,v)dv =& \int_{V}\varphi(v)\left(\sum_{j=1}^{n}\left[\lambda_{i,j}P(j\rightarrow i)f_j(t,v)-\lambda_{j,i}P(i\rightarrow j)f_i(t,v)\right]\right)dv \\
		 &+  \sum_{j=1}^{n}\sum_{k=1}^n\int_{V} \int_V\left[\lambda_{j}\varphi(v')P(i,v'|j,v)f_j(t,v)-\lambda_{i}\varphi(v)P(j,v\vert i,v')f_i(t,v)\right]dv dv' ,\quad i\in\mathcal{X},
	\end{aligned}
	\label{eq:boltz.fi}
\end{align}
where $\varphi:[0,1]\rightarrow\mathbb{R}$ is a test function.
In \eqref{eq:boltz.fi}, the first and second lines account for the gain term of the Markovian processes describing the label switches that happen, respectively, independently of the evolution of the viral load, and simultaneously with the evolution of the viral load. The frequency $\lambda_{i}$ of the Markovian process due to an interaction with a background corresponds to the frequency of changing the microscopic state $v$ and it is, as previously stated, $\lambda_i=\lambda_\gamma$, $\forall i \in \mathcal{X}\setminus\lbrace B,D_d,D_\mu \rbrace$.

From (\ref{eq:boltz.fi}), we derive the kinetic equations describing the evolution of the distribution functions $f_i$'s, $i\in\mathcal{X}$.  For $i\in\mathcal{X}\setminus\{B,D_d,D_\mu\}$, namely the classes of living individuals,  we get
\begin{itemize}
\item susceptible individuals  ($i=S$)
\begin{align}
	\begin{aligned}[b]
		\dfrac{d}{dt}\int_{0}^1\varphi(v)f_S(t,v)dv &= \int_{0}^1\varphi(v)\left(\lambda_b \dfrac{b}{\rho_B(t)} f_B(t,v)-\lambda_\mu \mu f_S(t,v)\right)dv \\
		&\phantom{=}-\lambda_{\beta}\nu_\beta \int_{0}^1\int_0^1 \varphi(v') P_S(v') \rho_S(t)\delta(v-0)(\rho_{I_1}(t)+\rho_{I_2}(t))  dv  dv', 
	\end{aligned}
	\label{eq:boltz.fS1}
\end{align}
\item infectious individuals  with increasing viral  load ($i=I_1$) 
\begin{align}
	\begin{aligned}[b]
		\dfrac{d}{dt}\int_{0}^1\varphi(v)f_{I_1}(t,v)dv &=-\int_{0}^1\varphi(v)\left(\lambda_{H_1,I_1}(t)\alpha_H(v)f_{I_1}(t,v) +\lambda_\mu \mu f_{I_1}(t, v)\right) dv \\
		&\phantom{=}+\lambda_{\beta} \nu_\beta\int_{0}^1\int_0^1 \varphi(v')  P_S(v') \rho_S(t)\delta(v-0)(\rho_{I_1}(t)+\rho_{I_2}(t))  dv dv'\\
		&\phantom{=}-\lambda_\gamma\int_{0}^1\int_{0}^1\varphi(v')\eta(v')P_{I_1}(v'|v)f_{I_{1}}(t,v) dv dv'  \\
		&\phantom{=} +\lambda_\gamma\int_{0}^1\int_0^1 (\varphi (v')P_{I_1}(v'|v)f_{I_{1}}(t,v)-\varphi(v)P_{I_1}(v|v')f_{I_{1}}(t,v')) dv  dv', 
	\end{aligned}
	\label{eq:boltz.fI_1}
\end{align}
\item infectious individuals with decreasing viral  load ($i=I_2$)
\begin{align}
	\begin{aligned}[b]
		\dfrac{d}{dt}\int_{0}^1\varphi(v)f_{I_2}(t,v)dv &=- \int_{0}^1\varphi(v)\left(\lambda_{H_2,I_2}(t)\alpha_H(v)f_{I_2}(t,v)+\lambda_\mu \mu f_{I_2}(t, v)\right)dv \\
			&\phantom{=}+\lambda_\gamma\int_{0}^1\int_{0}^1\varphi(v')\eta(v')P_{I_1}(v'|v)f_{I_{1}}(t,v) dv dv' \\
		&\phantom{=}-\lambda_\gamma \int_0^1 \int_0^1\varphi(v') \gamma(v')P_{I_2}(v'|v)f_{I_2}(t, v) dv dv'\\
		&\phantom{=} +\lambda_\gamma\int_{0}^1\int_0^1 (\varphi (v')P_{I_2}(v'|v)f_{I_{2}}(t,v)-\varphi(v)P_{I_2}(v|v')f_{I_{2}}(t,v')) dv  dv', 
		\end{aligned}
	\label{eq:boltz.fI_2}
\end{align}
\item isolated individuals  with increasing viral  load ($i=H_1$)
\begin{align}
	\begin{aligned}[b]
		\dfrac{d}{dt}\int_{0}^1\varphi(v)f_{H_1}(t,v)dv &= \int_{0}^1\varphi(v)\left(\lambda_{H_1,I_1}(t)\alpha_H(v)f_{I_1}(t,v)-\lambda_d d f_{H_1}(t,v)-\lambda_\mu \mu f_{H_1}(t,v)\right)dv \\
		&\phantom{=}-\lambda_\gamma\int_{0}^1\int_{0}^1\varphi(v')\eta(v')P_{H_1}(v'|v)f_{H_1}(t,v) dv dv' \\
		&\phantom{=} +\lambda_\gamma\int_{0}^1\int_0^1(\varphi(v') P_{H_1}(v'|v)f_{H_1}(t,v)-\varphi(v)P_{H_1}(v|v')f_{H_{1}}(t,v')) dv  dv',
	\end{aligned}
	\label{eq:boltz.fH_1}
\end{align}
\item  isolated individuals  with decreasing viral  load ($i=H_2$)
\begin{align}
	\begin{aligned}[b]
		\dfrac{d}{dt}\int_{0}^1\varphi(v)f_{H_2}(t,v)dv &=\int_{0}^1\varphi(v)\left(\lambda_{H_2,I_2}(t)\alpha_H(v) f_{I_2}(t,v)-\lambda_d  d f_{H_2}(t,v)-\lambda_\mu \mu f_{H_2}(t,v)\right)dv \\
			&\phantom{=}+\lambda_\gamma\int_{0}^1\int_{0}^1\varphi(v')\eta(v')P_{H_1}(v'|v)f_{H_1}(t,v)  dv dv \\
			&\phantom{=}-\lambda_\gamma\int_{0}^1\int_{0}^1\varphi(v')\gamma(v')P_{H_2}(v'|v)f_{H_2}(t,v)  dv dv' \\
		&\phantom{=} +\lambda_\gamma\int_{0}^1\int_0^1 (\varphi(v')P_{H_2}(v'|v)f_{H_2}(t,v)-\varphi(v)P_{H_2}(v|v')f_{H_{2}}(t,v')) dv  dv',
	\end{aligned}
	\label{eq:boltz.fH_2_bis}
\end{align}
\item recovered individuals ($i=R$)
\begin{align}
	\begin{aligned}[b]
		\dfrac{d}{dt}\int_{0}^1\varphi(v)f_R(t,v)dv &=-\lambda_\mu \mu \int_{0}^1\varphi(v) f_R(t,v)dv  \\
		&\phantom{=}+\lambda_\gamma \int_0^1 \int_0^1\varphi(v') \gamma(v')P_{I_2}(v|v')f_{I_2}(t, v') dv dv'\\
		&\phantom{=}+\lambda_\gamma\int_{0}^1\int_{0}^1\varphi(v')\gamma(v')P_{H_2}(v'|v)f_{H_2}(t,v)  dv dv'\\
		&\phantom{=} +\lambda_\gamma\int_{0}^1\int_0^1 (\varphi(v')P_{R}(v'|v)f_{R}(t,v)-\varphi(v)P_{R}(v|v')f_{R}(t,v')) dv  dv'  .
	\end{aligned}
	\label{eq:boltz.fR}
\end{align}
\end{itemize}
Equations \eqref{eq:boltz.fS1}--\eqref{eq:boltz.fR} have to hold for every $\varphi: V\to\R$. 

In order to obtain the equations for the macroscopic densities and viral load momentum of each compartment, we set $\varphi(v)=v^n$  in \eqref{eq:boltz.fS1}--\eqref{eq:boltz.fR}, with $n=0,1$, respectively. Since setting $\varphi(v)=v^n$ in the evolution equations \eqref{eq:boltz.fS1}--\eqref{eq:boltz.fR} leads to the appearance of the $(n+1)$--th moment of $f_i$, namely $\int_0^1f_i(t,v)v^{n+1} dv$, we need to find a closure. 
Specifically, for each  compartment we  consider a monokinetic closure in the form
\begin{equation}\label{monokin}
f_i(t,v)=\rho_i(t) \delta\left(v-\dfrac{n_i(t)}{\rho_i(t)}\right), \quad i \in \mathcal{X},
\end{equation}
i.e. we assume that all the agents of the same compartment at a given time $t$ have the same viral load. 
As already observed, if $\rho_i(t)=0$, then the mean viral  load is not well defined. Notwithstanding, since $f_i$ is defined as a Dirac delta, we have that if $\varphi(v)$ is a test function, then: 
$$
\int_0^1\varphi(v) f_i(t,v)  dv=\varphi\left(\dfrac{n_i(t)}{\rho_i(t)}\right)\rho_i(t).
$$
Then, we consider test functions such that 
\begin{equation}\label{eq:test_f_delta}
\varphi\left(\dfrac{n_i(t)}{\rho_i(t)}\right)\rho_i(t) \rightarrow 0, \quad \textrm{if} \,\, \rho_i(t) \rightarrow 0.
\end{equation} 
When $\varphi(v)=1$ or $\varphi(v)=v$, namely the test functions allowing to recover the masses and momentum, respectively, condition \eqref{eq:test_f_delta} is satisfied. Moreover, we  deal with terms in the form 
\begin{equation}\label{eq:term_clos}
\int_0^1\varphi(v)\psi(v) f_i(t,v)  dv
\end{equation} 
with $\psi=\eta,\,\gamma,\, \alpha_H$. Since we assumed that the probabilities $\eta,\,\gamma$ are constant, then the integral \eqref{eq:term_clos} with $\psi=\eta,\,\gamma$, is well defined, i.e. it is not divided by a vanishing density, for both test functions $\varphi(v)=1$ and $\varphi(v)=v$. As far as the probability of being isolated is concerned, i.e. $\psi(v)=\alpha_H(v)$, we have that, applying the monokinetic closure, the integral \eqref{eq:term_clos} reads 
\[
\varphi\left(\dfrac{n_{I_j}(t)}{\rho_{I_j}(t)}\right)\lambda_{H_j,I_j}(t)\alpha_H	\left(\dfrac{n_{I_j}(t)}{\rho_{I_j}(t)}\right)\rho_{I_j}(t), \quad j=1,2.
\]
Hence, we have to choose the isolation frequency and probability function in such a way that the latter quantity is well defined.

The macroscopic model is given by the following system of non--linear ordinary differential equations:
\begin{equation}
\begin{aligned}
\dot \rho_S &=  \lambda_b b -\lambda_\beta \nu_\beta\rho_S{\rho_I}  -\lambda_\mu \mu\rho_S \\
\dot \rho_{I_1} &= \lambda_\beta \nu_\beta\rho_S {\rho_I} -\lambda_{H_1,I_1}(t)\alpha_H\left(\dfrac{n_{I_1}}{\rho_{I_1}}\right)\rho_{I_1}-\lambda_\gamma\nu_1\rho_{I_1} -\lambda_\mu \mu \rho_{I_1}\\
\dot \rho_{I_2}&=\lambda_\gamma\nu_1\rho_{I_1}-\lambda_{H_2,I_2}(t)\alpha_H\left(\dfrac{n_{I_2}}{\rho_{I_2}}\right) \rho_{I_2}-\lambda_\gamma \nu_2\rho_{I_2}-\lambda_\mu \mu \rho_{I_2}\\
\dot \rho_{H_1}&= \lambda_{H_1,I_1}(t)\alpha_H\left(\dfrac{n_{I_1}}{\rho_{I_1}}\right)\rho_{I_1}-\lambda_\gamma\nu_1\rho_{H_1}-\lambda_d d \rho_{H_1}-\lambda_\mu \mu \rho_{H_1} \\
\dot \rho_{H_2}&=\lambda_\gamma\nu_1\rho_{H_1}+ \lambda_{H_2,I_2}(t)\alpha_H\left(\dfrac{n_{I_2}}{\rho_{I_2}}\right) \rho_{I_2}-\lambda_{\gamma}\nu_2\rho_{H_2} -\lambda_d d \rho_{H_2}-\lambda_\mu \mu \rho_{H_2} \\
\dot \rho_R &=\lambda_\gamma\nu_2\rho_{I_2}+\lambda_{\gamma}\nu_2\rho_{H_2}-\lambda_\mu \mu \rho_R\\
\dot n_{I_1}&= \lambda_\beta\nu_\beta v_0\rho_S{\rho_I}-\lambda_{H_1,I_1}(t)\alpha_H\left(\dfrac{n_{I_1}}{\rho_{I_1}}\right) n_{I_1}-\lambda_\gamma\nu_1 (n_{I_1}+(\nu_1-1)(\rho_{I_1}-n_{I_1}))-\lambda_\mu \mu n_{I_1} \\
\dot n_{I_2}&= \lambda_\gamma\nu_1(n_{I_1}+\nu_1 (\rho_{I_1}-n_{I_1}))-\lambda_{H_2,I_2}(t)\alpha_H\left(\dfrac{n_{I_2}}{\rho_{I_2}}\right) n_{I_2}-\lambda_\gamma\nu_2(2-\nu_2) n_{I_2}-\lambda_\mu \mu n_{I_2}  \\
\dot n_{H_1}&=\lambda_{H_1,I_1}(t)\alpha_H\left(\dfrac{n_{I_1}}{\rho_{I_1}}\right) n_{I_1}- \lambda_\gamma\nu_1(n_{H_1}+(\nu_1-1)(\rho_{H_1}-n_{H_1}))-\lambda_d d n_{H_1}-\lambda_\mu \mu n_{H_1}\\
\dot n_{H_2}&= \lambda_\gamma\nu_1(n_{H_1}+\nu_1 (\rho_{H_1}-n_{H_1}))+\lambda_{H_2,I_2}(t)\alpha_H\left(\dfrac{n_{I_2}}{\rho_{I_2}}\right) n_{I_2} -\lambda_\gamma \nu_2(2-\nu_2)n_{H_2}-\lambda_d d n_{H_2}-\lambda_\mu \mu n_{H_2} \\
\dot n_R&=\lambda_\gamma\nu_2 (1-\nu_2) n_{I_2}+\lambda_\gamma \nu_2(1-\nu_2)n_{H_2}-\lambda_\gamma \nu_2 n_R-\lambda_\mu \mu n_R.
\end{aligned}\label{macro_simplified}
\end{equation}
For convenience of notation, in (\ref{macro_simplified}) we have denoted with the upper dot the time derivative and omitted the explicit dependence on time of the state variables.

To model (\ref{macro_simplified}) we associate the following generic initial conditions
\begin{equation}\label{CI}
	\rho_S(0)=\rho_{S,0}>0,\,\,\rho_{i}(0)=\rho_{i,0}\geq 0,\,\,n_{i}(0)=n_{i,0}\geq 0,\quad i\in\{I_1,I_2,H_1,H_2,R\}.
\end{equation}
\textbf{Remark.} We observe that, by assuming $\gamma(v)=\nu_2$ and $\eta(v)=\nu_1$ (see Section \ref{Sec:micro}), we are not keeping into account the dependence of the transitions $(I_2\rightarrow R)$, $(H_2\rightarrow R)$, $(I_1\rightarrow I_2)$, $(H_1\rightarrow H_2)$ on the viral load value. However, with  these choices, the recovery rate of the infected individuals in classes $I_2$ and $H_2$  is $\lambda_{\gamma}\nu_2$, that is the decay rate of the viral load. Analogously, the rate of transition from $I_1$ [resp. $H_1$] to $I_2$  [resp. $H_2$]  is the increase rate of the viral load, $\lambda_{\gamma}\nu_1$.

\medskip 

As far as the products
$\lambda_{H_j,I_j}(t)\alpha_H(v)$, $j=1,2$,  are concerned, let us note that, if both the frequencies $\lambda_{H_j,I_j}$, $j=1,2$, and the probability $\alpha_H(v)$   are assumed to be constant, from (\ref{macro_simplified}) we retrieve a \textit{classical} SIR--like model with constant isolation rate. The qualitative analysis of the ensuing model  can be easily obtained and is here omitted.

We focus instead on the impact of viral load sensitivity of tests and frequency of testing activities  on the epidemic dynamics  
and consider the case that:
\begin{itemize}
		\item the probability for  infectious individuals to be isolated, $\alpha_H(v)$, linearly increases with their viral load: $\alpha_H(v)=\alpha v$, where    $\alpha\in[0,1]$ is a constant;
	\item the frequencies  $\lambda_{H_j,I_j}(t)$ are linearly dependent on the densities of infectious individuals: $\lambda_{H_j,I_j}(t)=\lambda_\alpha\rho_{I_j}(t)$, $j=1,2$, where    $\lambda_\alpha$ is a non--negative constant. Namely, we assume that the efforts made by public health authorities in screening and diagnostic activities increase with the increase of infectious presence in the community. Indeed, when infectious individuals are few, search activities could be highly expensive and little effective.
\end{itemize}
With these choices, in system (\ref{macro_simplified}), the isolation terms become
\begin{equation}
	\lambda_{H_j,I_j}(t)\alpha_H\left(\dfrac{n_{I_j}}{\rho_{I_j}}\right)=\lambda_\alpha\alpha n_{I_j},\quad j=1,2.\label{lalpha}
\end{equation} 
Equilibria and stability properties of model  (\ref{macro_simplified})--(\ref{lalpha}) will be investigated in the following section.

\section{Qualitative analysis}\label{sec:4}
Since in model (\ref{macro_simplified})--(\ref{lalpha}) the differential equations for $\rho_S,$ $\rho_{I_1},$ $\rho_{I_2},$ $n_{I_1},$ $n_{I_2}$  do not depend on $\rho_{H_1},$ $\rho_{H_2},$ $\rho_{R},$ $n_{H_1},$ $n_{H_2},$ $n_{R}$, it is not restrictive to limit our analysis to system
\begin{subequations}
	\begin{align}
		\dot \rho_S &=  \lambda_b b -\lambda_\beta \nu_\beta\rho_S(\rho_{I_1}+\rho_{I_2})  -\lambda_\mu \mu\rho_S\label{S'} \\
		\dot \rho_{I_1} &= \lambda_\beta\nu_\beta\rho_S (\rho_{I_1}+\rho_{I_2}) -\lambda_\alpha\alpha\rho_{I_1} n_{I_1}-\lambda_\gamma\nu_1\rho_{I_1} -\lambda_\mu \mu \rho_{I_1} \label{I1'}\\
		\dot \rho_{I_2}&=\lambda_\gamma\nu_1\rho_{I_1}-\lambda_\alpha\alpha \rho_{I_2}n_{I_2}-\lambda_\gamma \nu_2\rho_{I_2}-\lambda_\mu \mu \rho_{I_2}\label{I2'}\\
	\dot n_{I_1}&= \lambda_\beta\nu_\beta v_0\rho_S(\rho_{I_1}+\rho_{I_2})-\lambda_\alpha \alpha n_{I_1}^2-\lambda_\gamma\nu_1 (n_{I_1}+(\nu_1-1)(\rho_{I_1}-n_{I_1}))-\lambda_\mu \mu n_{I_1}  \label{n1'}\\
	\dot n_{I_2}&= \lambda_\gamma\nu_1(n_{I_1}+\nu_1 (\rho_{I_1}-n_{I_1}))-\lambda_\alpha \alpha n_{I_2}^2-\lambda_\gamma\nu_2(2-\nu_2) n_{I_2}-\lambda_\mu \mu n_{I_2}.   \label{n2'}
	\end{align}\label{macro2}
\end{subequations}
It is straightforward to verify that the region
\begin{equation}\label{regionD}
	\mathcal{D}=\left\{\left(\rho_S,\rho_{I_1},\rho_{I_2},n_{I_1},n_{I_2}\right)\in [0,1]^5 \,\Big|\, 0<\rho_S+\rho_{I_1}+\rho_{I_2}\leq \dfrac{\lambda_b b}{\lambda_\mu \mu},\,n_{I_1}\leq\rho_{I_1},\,n_{I_2}\leq\rho_{I_2}\right\}
\end{equation}
with initial conditions in (\ref{CI}) is positively invariant for model (\ref{macro2}), namely any solution of (\ref{macro2}) starting in $\mathcal{D}$ remains in $\mathcal{D}$ for all $t\geq 0$.

In the following, we search for model equilibria and derive suitable thresholds ruling their local or global stability.

\subsection{Disease--free equilibrium and its stability}
The model  (\ref{macro2}) has a unique disease--free equilibrium (DFE),  given by
\begin{equation*}
	DFE=\left(\dfrac{\lambda_b b}{\lambda_\mu \mu}, 0,0,0,0\right).
\end{equation*}
It is obtained
by setting the r.h.s. of equations  (\ref{macro2})  to zero and considering the case $\rho_{I_1}=\rho_{I_2}=0$.
\begin{prop}\label{ProplocalDFE}
	The DFE of system  (\ref{macro2}) is locally asymptotically  stable (LAS) if $\mathcal{R}_0<1$, where
	\begin{equation}\label{R0}
		\mathcal{R}_0=\lambda_\beta\nu_{\beta} \dfrac{\lambda_b b}{\lambda_\mu \mu}\dfrac{\lambda_\gamma\nu_1+\lambda_\gamma\nu_2   +\lambda_\mu \mu }{(\lambda_\gamma\nu_1   +\lambda_\mu \mu)(\lambda_\gamma\nu_2   +\lambda_\mu \mu)}.
	\end{equation}
	 Otherwise, if $\mathcal{R}_0>1$, it is unstable.
\end{prop}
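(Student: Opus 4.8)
The plan is to establish both claims by linearising system (\ref{macro2}) about the DFE and exploiting the block structure of the resulting Jacobian $J$. Writing $\bar\rho_S=\lambda_b b/(\lambda_\mu\mu)$ for the susceptible component of the DFE and ordering the variables as $(\rho_S,\rho_{I_1},\rho_{I_2},n_{I_1},n_{I_2})$, I would first evaluate $J$ at the DFE. The only $\rho_S$--dependence in the infected--type equations sits in the infection terms $\lambda_\beta\nu_\beta\rho_S(\rho_{I_1}+\rho_{I_2})$ and its $v_0$--scaled counterpart, whose $\rho_S$--derivative is proportional to $\rho_{I_1}+\rho_{I_2}=0$ at the DFE; every other term in those equations is $\rho_S$--free. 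Hence the first column of $J$ is $(-\lambda_\mu\mu,0,0,0,0)^\top$, so $-\lambda_\mu\mu<0$ is an eigenvalue and the stability of the DFE is governed by the $4\times4$ block $J_4$ acting on $(\rho_{I_1},\rho_{I_2},n_{I_1},n_{I_2})$.

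The key simplification is that $J_4$ is itself block lower--triangular. Since $\partial_{n_{I_j}}\dot\rho_{I_j}=-\lambda_\alpha\alpha\rho_{I_j}$ vanishes at the DFE, the moment variables do not feed back into the infected--density equations at the linear level, and $J_4$ splits into a $2\times2$ block for $(n_{I_1},n_{I_2})$ and a $2\times2$ block
\[
M=\begin{pmatrix}
\lambda_\beta\nu_\beta\bar\rho_S-\lambda_\gamma\nu_1-\lambda_\mu\mu & \lambda_\beta\nu_\beta\bar\rho_S\\
\lambda_\gamma\nu_1 & -\lambda_\gamma\nu_2-\lambda_\mu\mu
\end{pmatrix}
\]
for $(\rho_{I_1},\rho_{I_2})$. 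The moment block is lower triangular with diagonal entries $-\lambda_\gamma\nu_1(2-\nu_1)-\lambda_\mu\mu$ and $-\lambda_\gamma\nu_2(2-\nu_2)-\lambda_\mu\mu$, both strictly negative because $\nu_1,\nu_2\in(0,1)$; these furnish two eigenvalues with negative real part and may be discarded. Thus the whole question reduces to the spectrum of $M$.

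For $M$ I would apply the trace--determinant criterion. Expanding the determinant yields
\[
\det M=(\lambda_\gamma\nu_1+\lambda_\mu\mu)(\lambda_\gamma\nu_2+\lambda_\mu\mu)-\lambda_\beta\nu_\beta\bar\rho_S(\lambda_\gamma\nu_1+\lambda_\gamma\nu_2+\lambda_\mu\mu),
\]
and dividing by the positive factor $(\lambda_\gamma\nu_1+\lambda_\mu\mu)(\lambda_\gamma\nu_2+\lambda_\mu\mu)$, after substituting $\bar\rho_S=\lambda_b b/(\lambda_\mu\mu)$, one sees that $\det M>0$ is precisely $\mathcal{R}_0<1$ with $\mathcal{R}_0$ as in (\ref{R0}). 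It remains to check that $\det M>0$ already forces $\operatorname{tr}M<0$, so the trace condition is not independent: setting $a=\lambda_\gamma\nu_1+\lambda_\mu\mu$, $c=\lambda_\gamma\nu_2+\lambda_\mu\mu$, $p=\lambda_\beta\nu_\beta\bar\rho_S$, one has $\operatorname{tr}M=p-a-c$, while $\mathcal{R}_0<1$ reads $p<ac/(a+c-\lambda_\mu\mu)$; since $a,c\ge\lambda_\mu\mu$ gives $a^2+c^2\ge\lambda_\mu\mu(a+c)$ and hence $ac/(a+c-\lambda_\mu\mu)<a+c$, it follows that $p<a+c$, i.e. $\operatorname{tr}M<0$. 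Both eigenvalues of $M$ then have negative real part and the DFE is LAS for $\mathcal{R}_0<1$.

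For $\mathcal{R}_0>1$ the same expansion gives $\det M<0$; a real $2\times2$ matrix with negative determinant has two real eigenvalues of opposite sign, so $M$ — and therefore $J$ — possesses a strictly positive eigenvalue and the DFE is unstable. The main obstacle is purely algebraic: carrying out the determinant expansion and matching it term by term with the stated $\mathcal{R}_0$, together with the short inequality showing that the trace condition follows from the determinant condition. As an alternative way to identify $\mathcal{R}_0$, one could apply the next--generation matrix method to the infected compartments $(\rho_{I_1},\rho_{I_2},n_{I_1},n_{I_2})$ and compute the spectral radius of $FV^{-1}$; the block structure above shows this reproduces the same threshold, but the direct Jacobian reduction is more transparent here.
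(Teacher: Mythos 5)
Your proposal is correct and follows essentially the same route as the paper's proof: both evaluate the Jacobian at the DFE, exploit its (block) triangular structure to peel off the three negative eigenvalues $-\lambda_\mu\mu$, $-\lambda_\gamma\nu_1(2-\nu_1)-\lambda_\mu\mu$, $-\lambda_\gamma\nu_2(2-\nu_2)-\lambda_\mu\mu$, and reduce the question to the same $2\times2$ block $M=\bar J$ for $(\rho_{I_1},\rho_{I_2})$, handled by the trace--determinant criterion with $\det>0\iff\mathcal{R}_0<1$ forcing $\operatorname{tr}<0$. Your explicit algebraic verification that the determinant condition implies the trace condition, and your explicit negative-determinant argument for instability, merely spell out steps the paper leaves to a sign inspection of the entries of $\bar J$.
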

\begin{proof}
The Jacobian matrix $J$ of system  (\ref{macro2}) evaluated at the DFE reads
\begin{equation*}
	J(DFE)=\left(\begin{array}{ccccc}	  -\lambda_\mu \mu & -\lambda_\beta\nu_{\beta} \dfrac{\lambda_b b}{\lambda_\mu \mu}   &		-\lambda_\beta\nu_{\beta} \dfrac{\lambda_b b}{\lambda_\mu \mu}  &		0   &		0\\
			0 &\lambda_\beta\nu_{\beta} \dfrac{\lambda_b b}{\lambda_\mu \mu} -\lambda_\gamma\nu_1   -\lambda_\mu \mu & \lambda_\beta\nu_{\beta} \dfrac{\lambda_b b}{\lambda_\mu \mu}  &0&0\\
			0&\lambda_\gamma\nu_1    &-\lambda_\gamma\nu_2-\lambda_\mu\mu&0&0\\
			0&	\lambda_\beta\nu_{\beta}v_0\dfrac{\lambda_b b}{\lambda_\mu \mu}+\lambda_\gamma \nu_1(1-\nu_1) & 	\lambda_\beta\nu_{\beta}v_0\dfrac{\lambda_b b}{\lambda_\mu \mu}&-\lambda_\gamma\nu_1(2-\nu_1) -\lambda_\mu\mu&0\\
			0&\lambda_\gamma \nu_1^2&0&\lambda_\gamma \nu_1(1-\nu_1)  &-\lambda_\gamma\nu_2(2-\nu_2) -\lambda_\mu\mu
		\end{array}\right).
\end{equation*}
One can immediately get the eigenvalues $l_1= -\lambda_\mu \mu<0$, $l_2=-\lambda_\gamma\nu_1(2-\nu_1) -\lambda_\mu\mu<0$, $l_3=-\lambda_\gamma\nu_2(2-\nu_2) -\lambda_\mu\mu<0$, while the other two are determined by the submatrix
$$\bar J=	\left(\begin{array}{cc}	
	\lambda_\beta\nu_{\beta} \dfrac{\lambda_b b}{\lambda_\mu \mu} -\lambda_\gamma\nu_1   -\lambda_\mu \mu & \lambda_\beta\nu_{\beta} \dfrac{\lambda_b b}{\lambda_\mu \mu}  \\
	\lambda_\gamma\nu_1    &-\lambda_\gamma\nu_2-\lambda_\mu\mu
\end{array}\right). $$
From the sign of the  entries of $\bar J$, it  follows that  det$(\bar J)\geq0$ implies tr$(\bar J)<0$. Hence, if det$(\bar J)>0$ or, equivalently, if $\mathcal{R}_0<1$, with $\mathcal{R}_0$ given in (\ref{R0}),
then the DFE is LAS. Otherwise, if $\mathcal{R}_0>1$,  it is unstable.\end{proof}

The threshold quantity $\mathcal{R}_0$ is the so--called \textit{basic reproduction number} for model (\ref{macro2}), a frequently used indicator for measuring the potential spread of an infectious disease in a community. Epidemiologically, it represents the average number of secondary cases produced by one primary infection over the course of the infectious period in a fully susceptible population. One can easily verify that the same quantity can be obtained  as the spectral radius of the so--called \textit{next generation} matrix 
\cite{vandendriessche2002}.

As far as the global stability of the DFE, we prove the following theorem.
\begin{theorem}\label{ThGAS}
	The DFE of system  (\ref{macro2}) is globally asymptotically stable  if $\mathcal{R}_0< 1$.
	\begin{proof}
Consider the following  function
\begin{equation*}
	\mathcal{L}=\dfrac{\lambda_\gamma\nu_1+\lambda_\gamma\nu_2   +\lambda_\mu \mu }{\lambda_\gamma\nu_1   +\lambda_\mu \mu}
	\rho_{I_1}+	\rho_{I_2}.
\end{equation*}
It is easily seen that the $\mathcal{L}$ is non--negative in $\mathcal{D}$ (see (\ref{regionD})) and also $\mathcal{L} = 0$ if and only if  $\rho_{I_1}=\rho_{I_2}=0$.
The time derivative of $\mathcal{L}$  along the solutions of system (\ref{macro2}) in $\mathcal{D}$ reads
\begin{align*}
	\dot{\mathcal{L}}=&\dfrac{\lambda_\gamma\nu_1+\lambda_\gamma\nu_2   +\lambda_\mu \mu }{\lambda_\gamma\nu_1   +\lambda_\mu \mu}
	\dot \rho_{I_1}+	\dot \rho_{I_2}\\
	=&\dfrac{\lambda_\gamma\nu_1+\lambda_\gamma\nu_2   +\lambda_\mu \mu }{\lambda_\gamma\nu_1   +\lambda_\mu \mu}
	\left(\lambda_\beta\nu_{\beta}\rho_S (\rho_{I_1}+\rho_{I_1})-\lambda_{\alpha}\alpha\rho_{I_1}n_{I_1}\right)-\lambda_{\alpha}\alpha\rho_{I_2}n_{I_2}-(\lambda_\gamma\nu_2   +\lambda_\mu \mu)(\rho_{I_1}+\rho_{I_1})\\
	\leq&(\lambda_\gamma\nu_2   +\lambda_\mu \mu)\left(\lambda_\beta\nu_{\beta}\rho_S\dfrac{\lambda_\gamma\nu_1+\lambda_\gamma\nu_2   +\lambda_\mu \mu }{(\lambda_\gamma\nu_1   +\lambda_\mu \mu)(\lambda_\gamma\nu_2   +\lambda_\mu \mu)}-1\right)(\rho_{I_1}+\rho_{I_1})\\
	\leq &(\lambda_\gamma\nu_2   +\lambda_\mu \mu)\left(\mathcal{R}_0 - 1\right)(\rho_{I_1}+\rho_{I_1}).
\end{align*}
It follows that $\dot{\mathcal{L}}\leq0$ for $\mathcal{R}_0 < 1$ with $\dot{\mathcal{L}}=0$ only if $\rho_{I_1}=\rho_{I_2}=0$.
Hence, ${\mathcal{L}}$ is a Lyapunov function on $\mathcal{D}$ and the largest compact invariant set in 
$\{\left(\rho_S,\rho_{I_1},\rho_{I_2},n_{I_1},n_{I_2}\right)\in \mathcal{D}: \dot{\mathcal{L}} = 0 \}$ is
the singleton \{DFE\}. Therefore, from the La Salle's invariance principle \cite{lasalle}, every solution to system (\ref{macro2}) with initial conditions in (\ref{CI})
approaches the DFE, as $t\rightarrow +\infty$.

As an alternative proof, one may adopt the approach developed by Castillo--Chavez \textit{et al.} in \cite{castillo}. 
\end{proof}\end{theorem}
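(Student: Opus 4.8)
The plan is to build a Lyapunov function that is linear in the two infectious densities $\rho_{I_1}$ and $\rho_{I_2}$, since the DFE is distinguished precisely by the vanishing of these components while $\rho_S$, $n_{I_1}$, $n_{I_2}$ play only a passive role. I would take a candidate of the form $\mathcal{L} = c_1\rho_{I_1} + c_2\rho_{I_2}$ with positive weights, which is automatically non-negative on the invariant region $\mathcal{D}$ of (\ref{regionD}) and vanishes exactly on $\{\rho_{I_1}=\rho_{I_2}=0\}$. The only real freedom is the choice of $c_1,c_2$: they must be tuned so that the cross-infection transfer $\lambda_\gamma\nu_1\rho_{I_1}$ appearing in \eqref{I2'} is absorbed and $\dot{\mathcal{L}}$ collapses to a scalar multiple of $(\rho_{I_1}+\rho_{I_2})$. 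The natural selection is $c_2=1$ and $c_1=(\lambda_\gamma\nu_1+\lambda_\gamma\nu_2+\lambda_\mu\mu)/(\lambda_\gamma\nu_1+\lambda_\mu\mu)$, i.e.\ the left eigenvector of the next-generation matrix that produced $\mathcal{R}_0$ in (\ref{R0}); this is why the same combinatorial factor reappears.

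Next I would differentiate $\mathcal{L}$ along the flow, inserting \eqref{I1'}--\eqref{I2'}. The weight $c_1$ is engineered so that $c_1(\lambda_\gamma\nu_1+\lambda_\mu\mu)=\lambda_\gamma\nu_1+\lambda_\gamma\nu_2+\lambda_\mu\mu$, whence the linear loss terms recombine and the coefficient of $\rho_{I_1}$ reduces to $-(\lambda_\gamma\nu_2+\lambda_\mu\mu)$, matching that of $\rho_{I_2}$. What survives are the gain term $c_1\lambda_\beta\nu_\beta\rho_S(\rho_{I_1}+\rho_{I_2})$, the two nonlinear isolation terms $-c_1\lambda_\alpha\alpha\rho_{I_1}n_{I_1}$ and $-\lambda_\alpha\alpha\rho_{I_2}n_{I_2}$, and $-(\lambda_\gamma\nu_2+\lambda_\mu\mu)(\rho_{I_1}+\rho_{I_2})$. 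Since the isolation terms are non-negative on $\mathcal{D}$, I would drop them in an upper bound, then use the invariance bound $\rho_S\le \lambda_b b/(\lambda_\mu\mu)$; collecting constants turns the gain coefficient into exactly $\mathcal{R}_0$, giving
\[
\dot{\mathcal{L}} \le (\lambda_\gamma\nu_2+\lambda_\mu\mu)\,(\mathcal{R}_0-1)\,(\rho_{I_1}+\rho_{I_2}).
\]
For $\mathcal{R}_0<1$ this is $\le 0$, vanishing only when $\rho_{I_1}=\rho_{I_2}=0$. I would close the argument with LaSalle's invariance principle: the largest invariant set in $\{\dot{\mathcal{L}}=0\}$ forces $\rho_{I_1}\equiv\rho_{I_2}\equiv 0$, and back-substitution into (\ref{macro2}) drives $n_{I_1},n_{I_2}\to 0$ and $\rho_S\to\lambda_b b/(\lambda_\mu\mu)$, so the invariant set is the singleton $\{\mathrm{DFE}\}$ and every orbit from (\ref{CI}) converges to it.

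The main obstacle is pinning down the weight $c_1$: with any other choice $\dot{\mathcal{L}}$ retains a residual $\rho_{I_1}$ term of indefinite sign and the estimate breaks, so the construction hinges on importing the left Perron eigenvector from the next-generation setup already used for $\mathcal{R}_0$. A secondary, but essential, point is that discarding the nonlinear isolation contributions is only legitimate because they are dissipative on $\mathcal{D}$; this rests entirely on the positive invariance of $\mathcal{D}$ (in particular $0\le n_{I_j}\le\rho_{I_j}$), which I would take as established from the discussion following (\ref{regionD}). Should the Lyapunov route prove delicate, the alternative is the Castillo--Chavez splitting of (\ref{macro2}) into susceptible and infected blocks, verifying its structural hypotheses to conclude global stability directly from $\mathcal{R}_0<1$.
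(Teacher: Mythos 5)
Your proposal is correct and takes essentially the same route as the paper: the identical weighted Lyapunov function $\mathcal{L}=\frac{\lambda_\gamma\nu_1+\lambda_\gamma\nu_2+\lambda_\mu\mu}{\lambda_\gamma\nu_1+\lambda_\mu\mu}\,\rho_{I_1}+\rho_{I_2}$, the same computation of $\dot{\mathcal{L}}$ with the dissipative isolation terms $-\lambda_\alpha\alpha\rho_{I_j}n_{I_j}$ discarded on $\mathcal{D}$ and the invariance bound $\rho_S\le \lambda_b b/(\lambda_\mu\mu)$ applied, yielding $\dot{\mathcal{L}}\le(\lambda_\gamma\nu_2+\lambda_\mu\mu)(\mathcal{R}_0-1)(\rho_{I_1}+\rho_{I_2})$, followed by the same LaSalle invariance argument. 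The paper even mentions the same Castillo--Chavez approach as an alternative, so your fallback plan mirrors its remark as well.
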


\subsection{Endemic equilibria}
Let us denote with
$$EE=\left(\rho_S^E,\rho_{I_1}^E,\rho_{I_2}^E,n_{I_1}^E,n_{I_2}^E\right)$$
the generic endemic equilibrium of model (\ref{macro2}), obtained
by setting the r.h.s. of equations  (\ref{macro2})  to zero and considering the case $\rho_{I_1}+\rho_{I_2}>0$. 
Note that if it were $\rho_{I_1}^E=0$ [resp. $\rho_{I_2}^E=0$], from (\ref{I2'}) it would follow that  $\rho_{I_2}^E=0$  [resp. $\rho_{I_1}^E=0$].  Hence, it must be $\rho_{I_1}^E,\,\rho_{I_2}^E>0$.

More precisely, by rearranging equations (\ref{S'})--(\ref{I1'})--(\ref{I2'})--(\ref{n1'}), one obtains
\begin{equation}
\begin{aligned}
	\rho_S^E&=\dfrac{ \lambda _bb- ( \lambda _{\alpha }\alpha n_{I_1}^E + \lambda _{\gamma }\nu _1 +\lambda _{\mu }\mu ) \rho _{I_1}^E}{ \lambda _{\mu }\mu }\\
	\rho_{I_1}^E&=n_{I_1}^E\dfrac{ \lambda _{\alpha }\alpha  n_{I_1}^E+ \lambda _{\gamma }\nu _1(2-\nu_1) +\lambda _{\mu }\mu  }{ \lambda _{\gamma }\nu _1(1-\nu _1)+v_0( \lambda _{\alpha } \alpha  n_{I_1}^E+ \lambda _{\gamma }\nu _1 + \lambda _{\mu }\mu  )}\\
	\rho_{I_2}^E&=\dfrac{\lambda _bb -\lambda _{\beta } \nu _{\beta } \rho _S^E\rho _{I_1}^E -\lambda _{\mu }\mu   \rho _S^E}{\lambda _{\beta } \nu _{\beta } \rho _S^E}\\
	n_{I_2}^E&=\dfrac{ \lambda _{\gamma }\nu _1 \rho _{I_1}^E-(\lambda _{\gamma } \nu _2 + \lambda _{\mu }\mu)  \rho _{I_2}^E}{ \lambda _{\alpha }\alpha  \rho _{I_2}^E}.
\end{aligned}\label{EEcomp}
\end{equation}
Substituting expressions (\ref{EEcomp}) into (\ref{n2'}), one gets  $n_{I_1}^E$  as a positive root of the equation
\begin{equation}
 \lambda_\gamma\nu_1(n_{I_1}^E+\nu_1 (\rho_{I_1}^E-n_{I_1}^E))-\lambda_\alpha \alpha (n_{I_2}^E)^2-\lambda_\gamma\nu_2(2-\nu_2) n_{I_2}^E-\lambda_\mu \mu n_{I_2}^E=0  .	\label{n1EE}
\end{equation}	
Due to the complexity of equation (\ref{n1EE}), we renounce to get an explicit expression for
$n_1^E$ and, hence, to derive the existence conditions and number of endemic equilibria. However, we will make use of bifurcation analysis to show that a unique
branch corresponding to an unique endemic equilibrium emerges from the criticality, namely at DFE and $\mathcal{R}_0=1$.
	
\subsection{Central manifold analysis}\label{Sec: central manifold}

To derive a sufficient condition for the occurrence of a transcritical  bifurcation at $\mathcal{R}_0=1$, we can use
a bifurcation theory approach. We adopt the approach developed in
\cite{dushoff1998,vandendriessche2002}, which is based
on the general center manifold theory \cite{guckenheimer1983}. In
short, it establishes that the normal form representing the dynamics
of the system on the central manifold is, for $u$ sufficiently small, given by:
$$
\dot{u}=A{u}^{2}+B\lambda_\beta\nu_\beta{u},
$$
where
\begin{equation}
	A=\dfrac{\mathbf{z}}{2}\cdot D_\mathbf{{xx}}\mathbf{F}(DFE,\overline{\lambda_\beta\nu_\beta})\mathbf{w}^{2}\equiv\dfrac{1}{2}{\sum_{k,i,j=1}^{5} z_{k}w_{i}w_{j}\dfrac{\partial^{2}F_{k}(DFE,\overline{\lambda_\beta\nu_\beta})}{\partial x_{i}\partial x_{j}}} \label{eq:a}
\end{equation}
and
\begin{equation}
	B=\mathbf{z}\cdot D_{\mathbf{x}(\lambda_\beta\nu_\beta)}\mathbf{F}(DFE,\overline{\lambda_\beta\nu_\beta})\mathbf{w}\equiv{\sum^{5}_{k,i=1}}z_{k}w_{i}\dfrac{\partial^{2}F_{k}(DFE,\overline{\lambda_\beta\nu_\beta})}{\partial x_{i}\partial(\lambda_\beta\nu_\beta)}.\label{eq:b}
\end{equation}
Note that in (\ref{eq:a}) and (\ref{eq:b}) the product $\lambda_\beta\nu_\beta$ has been chosen
as bifurcation parameter, $\overline{\lambda_\beta\nu_\beta}$ is the critical value of $\lambda_\beta\nu_\beta$, $\mathbf{x}=\left(\rho_S,\rho_{I_1},\rho_{I_2},n_{I_1},n_{I_2}\right)$ is the state variables vector,
$\mathbf{F}$ is the right--hand side of system (\ref{macro2}),
and $\mathbf{z}$ and $\mathbf{w}$
denote, respectively, the left and right eigenvectors corresponding
to the null eigenvalue of the Jacobian matrix evaluated at criticality
(i.e. at DFE and $\lambda_\beta\nu_\beta=\overline{\lambda_\beta\nu_\beta}$).

Observe that $\mathcal{R}_0=1$ is equivalent to:
\[
	\lambda_\beta\nu_{\beta}=\overline{\lambda_\beta\nu_\beta}= \dfrac{\lambda_\mu \mu}{\lambda_b b}\dfrac{(\lambda_\gamma\nu_1   +\lambda_\mu \mu)(\lambda_\gamma\nu_2   +\lambda_\mu \mu) }{\lambda_\gamma\nu_1+\lambda_\gamma\nu_2   +\lambda_\mu \mu}
\]
so that the disease--free equilibrium is  stable if $\lambda_\beta\nu_\beta<\overline{\lambda_\beta\nu_\beta}$,
and it is unstable when $\lambda_\beta\nu_\beta>\overline{\lambda_\beta\nu_\beta}$. 

The direction of the bifurcation occurring at $\lambda_\beta\nu_\beta=\overline{\lambda_\beta\nu_\beta}$ can
be derived from the sign of coefficients (\ref{eq:a}) and (\ref{eq:b}).
More precisely, if $A>0$ [resp. $A<0$] and $B>0$, then at $\lambda_\beta\nu_\beta=\overline{\lambda_\beta\nu_\beta}$ there
is a backward [resp. forward] bifurcation.

For our model, we prove the following theorem.
\begin{theorem}
	System  (\ref{macro2}) exhibits a  forward bifurcation at DFE
	and $\mathcal{R}_0=1$.
	\begin{proof}
		From the proof of Proposition \ref{ProplocalDFE}, one can  verify that, when $\lambda_\beta\nu_\beta=\overline{\lambda_\beta\nu_\beta}$  (or, equivalently,
		when $\mathcal{R}_0=1$), the Jacobian matrix $J(DFE)$ admits a simple zero eigenvalue and the other eigenvalues
		have negative real part. Hence, the
		DFE is a non--hyperbolic equilibrium.\\ 
		It can be easily checked that a left and a right eigenvector associated
		with the zero eigenvalue so that $\mathbf{z\cdot}\mathbf{w}=1$ are:
		\begin{gather*}
			\mathbf{z}=\left(0,z_2,\dfrac{( \lambda _{\gamma }\nu _1 +\lambda _{\mu }\mu  )(\lambda _{\gamma }\nu _2 + \lambda _{\mu }\mu)}{\lambda _{\gamma }\nu _1 ( \lambda _{\gamma }\nu _1 +\lambda _{\mu }\mu)+(\lambda _{\gamma }\nu _2 + \lambda _{\mu }\mu)(\lambda _{\gamma }\nu _1+\lambda _{\gamma }\nu _2 + \lambda _{\mu }\mu) },0,0\right),\\
			\mathbf{w}=\left(-\dfrac{ \lambda _{\gamma }\nu _1 + \lambda _{\mu }\mu }{ \lambda _{\mu }\mu },1,\dfrac{  \lambda _{\gamma }\nu _1}{ \lambda _{\gamma }\nu _2+\lambda _{\mu }\mu},\dfrac{v_0(\lambda _{\gamma }\nu _1+\lambda _{\mu }\mu) +\lambda _{\gamma }\nu _1(1-\nu _1) }{ \lambda _{\gamma} \nu _1\left(2-\nu _1\right) +\lambda _{\mu }\mu  },w_5 \right)^{T},
		\end{gather*}
		with
		$$z_2=\dfrac{(\lambda _{\gamma }\nu _2 + \lambda _{\mu }\mu)(\lambda _{\gamma }\nu _1+\lambda _{\gamma }\nu _2 + \lambda _{\mu }\mu)}{\lambda _{\gamma }\nu _1 ( \lambda _{\gamma }\nu _1 +\lambda _{\mu }\mu)+(\lambda _{\gamma }\nu _2 + \lambda _{\mu }\mu)(\lambda _{\gamma }\nu _1+\lambda _{\gamma }\nu _2 + \lambda _{\mu }\mu) }$$
		and
		\[
		w_{5}=\lambda _{\gamma }\nu _1\dfrac{ \nu _1 \left(\lambda _{\gamma }+\lambda _{\mu }\mu  \right)+ v_0\left(1-\nu _1\right) \left(\lambda _{\gamma }\nu _1 +\lambda _{\mu }\mu  \right)}{\left( \lambda _{\gamma }\nu _1 \left(2-\nu _1\right)+ \lambda _{\mu }\mu \right) \left(\lambda _{\gamma }\nu _2 \left(2-\nu _2\right) + \lambda _{\mu }\mu \right)}.
		\]
		The coefficients $A$ and $B$ may be now explicitly computed. Considering
		only the non--zero components of the eigenvectors and computing the
		corresponding second derivative of $\mathbf{F}$, it follows that:
		\begin{align*}
			A&=z_{2}w_1\left[w_2\dfrac{\partial^{2}F_{2}(DFE,\overline{\lambda_\beta\nu_\beta})}{\partial \rho_S\partial \rho_{I_1}}+w_3\dfrac{\partial^{2}F_{2}(DFE,\overline{\lambda_\beta\nu_\beta})}{\partial \rho_S\partial \rho_{I_2}}\right]+z_2w_2w_4\dfrac{\partial^{2}F_{2}(DFE,\overline{\lambda_\beta\nu_\beta})}{\partial \rho_{I_1}\partial n_{I_1}}+z_3w_3w_5\dfrac{\partial^{2}F_{3}(DFE,\overline{\lambda_\beta\nu_\beta})}{\partial \rho_{I_2}\partial n_{I_2}}\\
			&=z_2w_1(1+w_3)\overline{\lambda_\beta\nu_\beta}-(z_2w_2w_4+z_3w_3w_5)\lambda_\alpha\alpha
		\end{align*}
		and
		\[
		B=z_{2}\left(w_{2}\dfrac{\partial^{2}F_{2}(DFE,\overline{\lambda_\beta\nu_\beta})}{\partial \rho_{I_1}\partial(\lambda_\beta\nu_\beta)}+w_{3}\dfrac{\partial^{2}F_{2}(DFE,\overline{\lambda_\beta\nu_\beta})}{\partial \rho_{I_2}\partial(\lambda_\beta\nu_\beta)}\right)=z_2(1+w_3)\overline{\lambda_\beta\nu_{\beta}} \dfrac{\lambda_b b}{\lambda_\mu \mu}
		\]
		where $z_2,\,z_3,\,w_2,\,w_3,\,w_4,\,w_5>0$ and $w_1<0$. Then, $A<0<B$. Namely, when $\lambda_\beta\nu_\beta-\overline{\lambda_\beta\nu_\beta}$ changes from negative to positive,
		DFE changes its stability from stable to unstable; correspondingly a negative unstable equilibrium becomes positive and locally asymptotically stable. This completes the proof.
\end{proof} \end{theorem}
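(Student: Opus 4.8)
The plan is to apply verbatim the center manifold reduction of Castillo--Chavez \textit{et al.} and van den Driessche--Watmough quoted just above the statement, and to reduce the question of the bifurcation direction to computing the signs of the two scalar coefficients $A$ and $B$ in \eqref{eq:a}--\eqref{eq:b}. First I would fix $\lambda_\beta\nu_\beta$ as the bifurcation parameter and record its critical value $\overline{\lambda_\beta\nu_\beta}$, obtained by solving $\mathcal{R}_0=1$; by the proof of Proposition \ref{ProplocalDFE} this is precisely the value at which the $2\times2$ submatrix $\bar J$ has $\det(\bar J)=0$, so that $J(DFE)$ acquires a simple zero eigenvalue while the remaining eigenvalues $l_1,l_2,l_3$ (already shown there to be negative) stay in the left half-plane. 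This non-hyperbolicity of the DFE at criticality is exactly the hypothesis the reduction theorem requires.

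Next I would compute the right and left null vectors $\mathbf{w}$ and $\mathbf{z}$ of $J(DFE)$ at $\lambda_\beta\nu_\beta=\overline{\lambda_\beta\nu_\beta}$, normalized by $\mathbf{z}\cdot\mathbf{w}=1$ (the normalization is a harmless scaling that does not affect signs). Because the first column of $J(DFE)$ is $(-\lambda_\mu\mu,0,0,0,0)^{T}$ and its last two columns couple only the $n_{I_1},n_{I_2}$ coordinates with negative diagonal entries, the relation $\mathbf{z}J(DFE)=\mathbf{0}$ forces $z_1=z_4=z_5=0$, leaving $z_2,z_3>0$; the components of $\mathbf{w}$ are then read off from the kernel equations componentwise, giving $w_1<0$ and $w_2,w_3,w_4,w_5>0$. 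Establishing these signs cleanly is the only genuine obstacle, since they are exactly what will pin down the sign of $A$; everything downstream is bookkeeping.

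Finally I would assemble $A$ and $B$ by listing the nonvanishing second-order partials of $\mathbf{F}$ at the DFE. Since $\mathbf{z}$ is supported on the $\rho_{I_1},\rho_{I_2}$ slots, only $F_2$ and $F_3$ contribute: the bilinear incidence term $\lambda_\beta\nu_\beta\rho_S(\rho_{I_1}+\rho_{I_2})$ yields $\partial^2 F_2/\partial\rho_S\partial\rho_{I_1}=\partial^2 F_2/\partial\rho_S\partial\rho_{I_2}=\overline{\lambda_\beta\nu_\beta}$ together with the parameter-derivatives $\partial^2 F_2/\partial\rho_{I_j}\partial(\lambda_\beta\nu_\beta)=\rho_S^{DFE}=\lambda_b b/(\lambda_\mu\mu)$ entering $B$, while the quadratic isolation terms $-\lambda_\alpha\alpha\rho_{I_1}n_{I_1}$ and $-\lambda_\alpha\alpha\rho_{I_2}n_{I_2}$ yield $\partial^2 F_2/\partial\rho_{I_1}\partial n_{I_1}=\partial^2 F_3/\partial\rho_{I_2}\partial n_{I_2}=-\lambda_\alpha\alpha$. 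Substituting the eigenvector signs, the incidence part of $A$ equals $z_2w_1(1+w_3)\overline{\lambda_\beta\nu_\beta}<0$ because $w_1<0$, and the isolation part $-(z_2w_2w_4+z_3w_3w_5)\lambda_\alpha\alpha$ is also negative, so the two contributions reinforce one another and $A<0$ with no cancellation to control; meanwhile $B=z_2(1+w_3)\overline{\lambda_\beta\nu_\beta}\,\lambda_b b/(\lambda_\mu\mu)>0$. By the sign rule recalled before the theorem, $A<0<B$ gives a forward (supercritical) bifurcation, so that as $\lambda_\beta\nu_\beta$ crosses $\overline{\lambda_\beta\nu_\beta}$ a locally asymptotically stable endemic branch emerges from the DFE, which is the claim. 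It is worth emphasizing that the isolation mechanism enters $A$ with the \emph{same} sign as the incidence term, so it cooperates with rather than opposes the forward direction.
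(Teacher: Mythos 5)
Your proposal is correct and follows essentially the same route as the paper: the same center manifold reduction with $\lambda_\beta\nu_\beta$ as bifurcation parameter, the same left/right null eigenvectors (including the correct sign pattern $z_1=z_4=z_5=0$, $z_2,z_3>0$, $w_1<0$, $w_2,\dots,w_5>0$), and the same expressions and signs for $A$ and $B$ leading to $A<0<B$ and hence a forward bifurcation. Your additional structural remark explaining why the last two columns of $J(DFE)$ force $z_4=z_5=0$, and the observation that the isolation term reinforces rather than opposes the sign of $A$, are consistent refinements of the paper's computation rather than a different argument.
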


\section{Numerical simulations}\label{sec:5}
In this section, we present and compare some numerical solutions of both the stochastic particle model \eqref{stoch_proc} and the macroscopic model (\ref{macro_simplified}). 

Our aim is to qualitatively  assess the interplay between the evolution of individuals' viral load and the disease spread and isolation control. Hence, demographic and epidemiological parameters values do not address a specific infectious disease and/or spatial area. They refer to a generic epidemic outbreak where control strategies rely on isolation of infectious individuals, as typically happens for new emerging infectious diseases (e.g., 2003--2004 SARS outbreak \cite{whoSARS}, 2014--2016 Western African Ebola virus epidemic \cite{cdcEbola}, the first phase of the ongoing COVID--19 pandemic \cite{whoCOVID}).

Numerical simulations are performed in MATLAB$^{\tiny{\textregistered}}$
 \cite{ma}.  We implement a Monte Carlo algorithm to simulate the  stochastic particle model \eqref{stoch_proc} and the 4th order  Runge--Kutta method with constant step size for integrating the system (\ref{macro_simplified}). Platform--integrated functions are used for getting the plots. 

\subsection{Parametrization}\label{Sec:par}
\begin{table}[t!]
	\centering\begin{tabular}{|@{}c|c|c@{}|}
		\hline
		Parameter& Description &Baseline value \\
		\hhline{|===|}
		    $\lambda_b$ &Frequency of new births or immigration & 1 days$^{-1}$\\	\hline
			$b$ & Newborns probability parameter &$2.58\cdot 10^{-5}$\\	\hline
			$\lambda_\mu$&  Frequency of natural deaths& 0.01 days$^{-1}$\\	\hline
			$\mu$&  Probability of dying of natural causes&$2.79\cdot 10^{-3}$\\	\hline
			$\lambda_\beta$ & Frequency of binary interactions& 1 days$^{-1}$\\	\hline
			$\nu_\beta$& Transmission probability parameter&0.29\\	\hline
			$v_0$&Initial viral load of infected individuals &0.01\\	\hline
			$\lambda_{H_1,I_1}(t)$&Frequency of isolation for $I_1$ members&See Section \ref{Sec:sim} \\	\hline
			$\lambda_{H_2,I_2}(t)$&Frequency of isolation for $I_2$ members&See Section \ref{Sec:sim} \\	\hline
			$\alpha_H(v)$& Probability for an infectious to be isolated &See Section \ref{Sec:sim} \\	\hline
			$\lambda_\gamma$& Frequency of viral load evolution&0.50 days$^{-1}$\\	\hline
			$\nu_1$&Factor of increase of the viral load&0.40\\	\hline
			$\nu_2$&Factor of decay of the viral load&0.20\\	\hline
			$\eta(v)$&Probability of having passed the viral load peak&$\nu_1$\\	\hline
			$\gamma(v)$&Probability of recovering&$\nu_2$\\	\hline
			$\lambda_d$& Frequency of disease--induced deaths& 0.01 days$^{-1}$\\	\hline
			$d$& Probability of dying from the disease& 0.10 \\
		\hline
	\end{tabular}\caption{List of  model parameters with corresponding description and baseline value.}\label{TabPar}
\end{table}

The time span of our numerical simulations is set to $t_f=1$ year
.
We are considering  an SIR--like model with demography and constant net inflow of susceptibles $\lambda_b b$. Since travel restrictions are usually implemented during epidemic outbreaks, we assume that $\lambda_b b$ accounts only for
new births (which can be assumed to be approximately constant due to the short time span of our analyses). Therefore, the net inflow of susceptibles is given by
	\begin{equation*}
		\lambda_b b=b_r\dfrac{\bar{N}}{N_{tot}},
	\end{equation*}
	where $b_r$ is the birth rate, $\bar N$ denotes the total resident population at the beginning of the epidemic, and 
	$N_{tot}$ is  the total system size. Note that $N_{tot}$ accounts for agents belonging to all model compartments $\mathcal{X}$ (including $B$, $D_\mu$, $D_d$), whereas $\bar N$ refers only to living individuals. 
	
	We assume a population of $\bar N=10^6$ individuals, representing, for example, the  inhabitants of a European metropolis.  Fluctuations in a time window of just over a year are considered negligible.
The most recent data by European Statistics  refer to  2019 and provide an average crude birth rate $b_r=9.5/1,000$ years$^{-1}$ \cite{euros1} and an average crude death rate $\lambda_\mu\mu=10.2/1,000$ years$^{-1}$ \cite{euros2}. 
The total (constant) system size $N_{tot}$ is set to $N_{tot}=\bar N/(1-\lambda_bbt_f)$, in such a way $N_{tot}=\bar N +\lambda_bbt_f N_{tot}$ is given by the sum of the initial population, $\bar N$, and the total inflow of individuals during the time span considered, $\lambda_bbt_f N_{tot}$. 

 For the epidemiological parameters we take the following baseline values:
 \begin{equation*}
\mathcal{R}_0=4,\,\,  \lambda_\gamma=1/2 \text{ days}^{-1},\,\, \nu_1=1/(5\lambda_\gamma),\,\, \nu_2=\nu_1/2,\,\,\lambda_d d= 9.997 \cdot 10^{-4}  \text{ days}^{-1}.
\end{equation*}
In particular, the product $\lambda_\gamma\nu_1$ can be interpreted as the inverse of the average time from exposure to viral load peak, whilst $\lambda_\gamma\nu_2$ as the inverse of the average time from viral load peak to recovery. The disease--induced death rate $\lambda_d d$ is estimated through the formula given by Day \cite{day2002}:
\begin{equation*}
	\label{d} \lambda_d d=(1-\lambda_\mu\mu T)\dfrac{C_F}{T},
\end{equation*}
where $C_F$ is the fatality rate and $T$ is the expected time from isolation until death. We assume $C_F=1$\% and $T=1/(\lambda_\gamma\nu_2)=10$ days.
As far as the  initial viral load  of infected individuals, $v_0$, is concerned, we assume that it is 1\% of the maximum reachable value ($v=1$), namely $v_0=0.01$. Finally, for the Monte Carlo simulation of the particle model \eqref{stoch_proc}, we further assume $\lambda_b=\lambda_{\beta}=1$ days$^{-1}$ and $\lambda_\mu=\lambda_d=0.01$ days$^{-1}$.

Initial data are set to the beginning of the epidemic, namely we consider  a single infectious individual in a totally susceptible population:
\begin{equation}\label{CIvalues}
	\rho_{S,0}=(\bar N-1)/N_{tot},\,\,\rho_{I_1,0}=1/N_{tot},\,\,n_{I_1,0}=v_0\rho_{I_1,0},\,\,\rho_{i,0}=n_{i,0}=0,\quad i\in\{I_2,H_1,H_2,R\}.
\end{equation}
All the parameters of the model as well as their baseline values are reported in Table \ref{TabPar}.

\subsection{The uncontrolled epidemic outbreak}\label{Sec:unc}
\begin{figure}[t]\centering
	\includegraphics[scale=1.1]{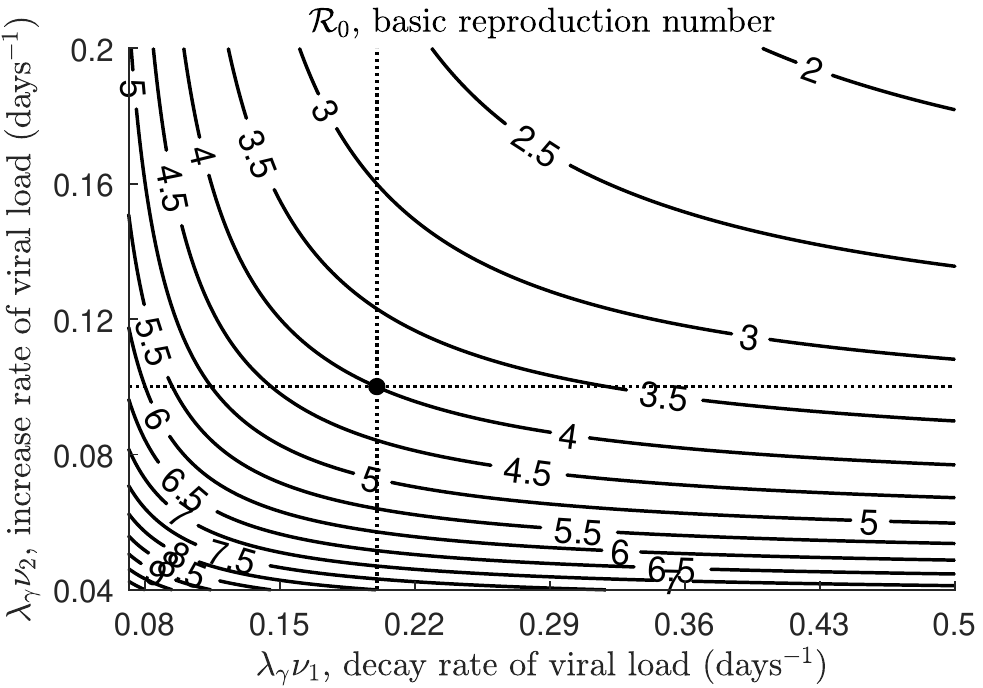}
	\caption{Contour plot of the basic reproduction number $\mathcal{R}_0$, as given in (\ref{R0}), versus the decay rate of viral load, $\lambda_\gamma\nu_1$,  and the increase rate of viral load, $\lambda_\gamma\nu_2$. Intersection between dotted black  lines indicates the value corresponding to the baseline scenario.  Other parameters values are given in Table \ref{TabPar}.  }\label{fig1}
\end{figure}

\begin{figure}[t]\centering
	\includegraphics[scale=1.1]{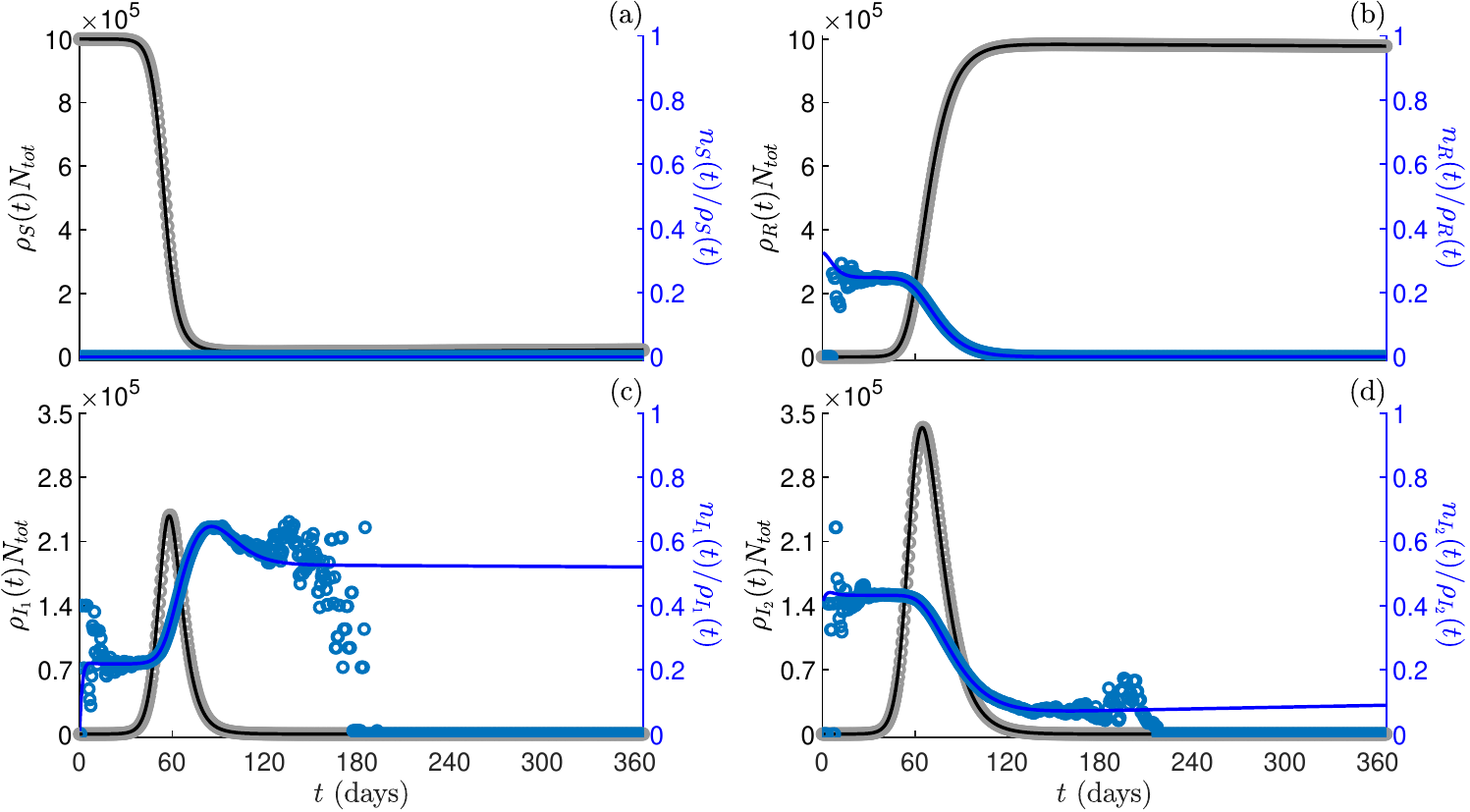}
	\caption{Epidemic dynamics in absence of isolation control ($\alpha_H\equiv 0$).   Compartment sizes (grey scale colour) and mean viral loads (blue scale colour) as predicted by model  (\ref{macro_simplified}) (solid lines) and by the stochastic process  \eqref{stoch_proc} (dots). Panel (a): susceptible, $S$. Panel (b): recovered, $R$. Panel (c): infectious with increasing viral load, $I_1$. Panel (d): infectious with decreasing viral load, $I_2$. Initial conditions and other parameters values are given in (\ref{CIvalues}) and Table \ref{TabPar}, respectively.  }\label{fig2}
\end{figure}

First, we numerically investigate the impact of the epidemiological parameters on the basic reproduction number $\mathcal{R}_0$, see (\ref{R0}). By considering the baseline parameters values,  we obtain that the ratio between  $\mathcal{R}_0$ and the transmission rate $\lambda_\beta\nu_\beta$ is about 13.83.  Fig. \ref{fig1} displays the contour plot of $\mathcal{R}_0$ versus  $\lambda_\gamma\nu_1$ (x--axis values)  and  $\lambda_\gamma\nu_2$ (y--axis values). We vary the average period of viral load increase, $1/(\lambda_\gamma\nu_1)$, in the range $[2,14]$ days and the average period of viral load decay, $1/(\lambda_\gamma\nu_2)$, in the range $[5,25]$ days. We obtain that $\mathcal{R}_0$ decreases with both $\lambda_\gamma\nu_1$ and  $\lambda_\gamma\nu_2$, from a maximum of $\mathcal{R}_0=10.39$ for $\lambda_\gamma\nu_1=1/14$ days$^{-1}$ and  $\lambda_\gamma\nu_1=1/25$ days$^{-1}$ to a minimum of $\mathcal{R}_0=1.87$ for $\lambda_\gamma\nu_1=1/2$ days$^{-1}$ and  $\lambda_\gamma\nu_2=1/5$ days$^{-1}$.

Let us now  set the basic reproduction number  to the baseline value $\mathcal{R}_0=4$ and investigate epidemic dynamics in absence of isolation control ($\alpha_H\equiv 0$). Numerical simulations are displayed in Fig. \ref{fig2}. We compare densities and viral loads mean. Specifically, solid lines refer to the solutions of the macroscopic model (\ref{macro_simplified}) and markers to those of the  stochastic particle model \eqref{stoch_proc}. We note a good match between the two approaches in predicting the dynamics of compartment sizes $\rho_iN_{tot}$, $i\in\mathcal{X}$ (grey scale colour): an epidemic outbreak invades the population, by reaching a prevalence peak of approximately $(\rho_{I_1}+\rho_{I_2})N_{tot}=531,000$ in  61 days; after 1 year the prevalence is almost zero and susceptible individuals are just about $21,700$. On the contrary, the dynamics of compartment mean viral loads  $n_i/\rho_i$, $i\in\mathcal{X}$ (blue scale colour) is different in the particle  w.r.t. the macroscopic model: the match is good as long as the corresponding compartment size is not so small  to make the effect of stochasticity relevant. For example, from Fig. \ref{fig2}(c)--(d), we see that, at the end of the epidemic wave, according to the particle model (blue markers)   the mean viral loads of infectious individuals fluctuate until approaching zero when the corresponding compartment becomes empty. Instead, the macroscopic model predicts that the same means remain approximately constant at a positive value (blue solid lines), suggesting that the  first moment $n_{I_1}$ [resp. $n_{I_2}$] and the density $\rho_{I_1}$ [resp. $\rho_{I_2}$] go to zero with the same \textit{speed}. This is due to the inconsistency of average quantities, like the mean viral loads, when the number of particles is very small. In that case, the deterministic macroscopic model cannot be justified by  means of the law of great numbers and statistical fluctuations must be taken into account.

\subsection{Viral load--dependent vs. constant isolation control}\label{Sec:sim}
\begin{figure}[t]\centering
	\includegraphics[scale=1.1]{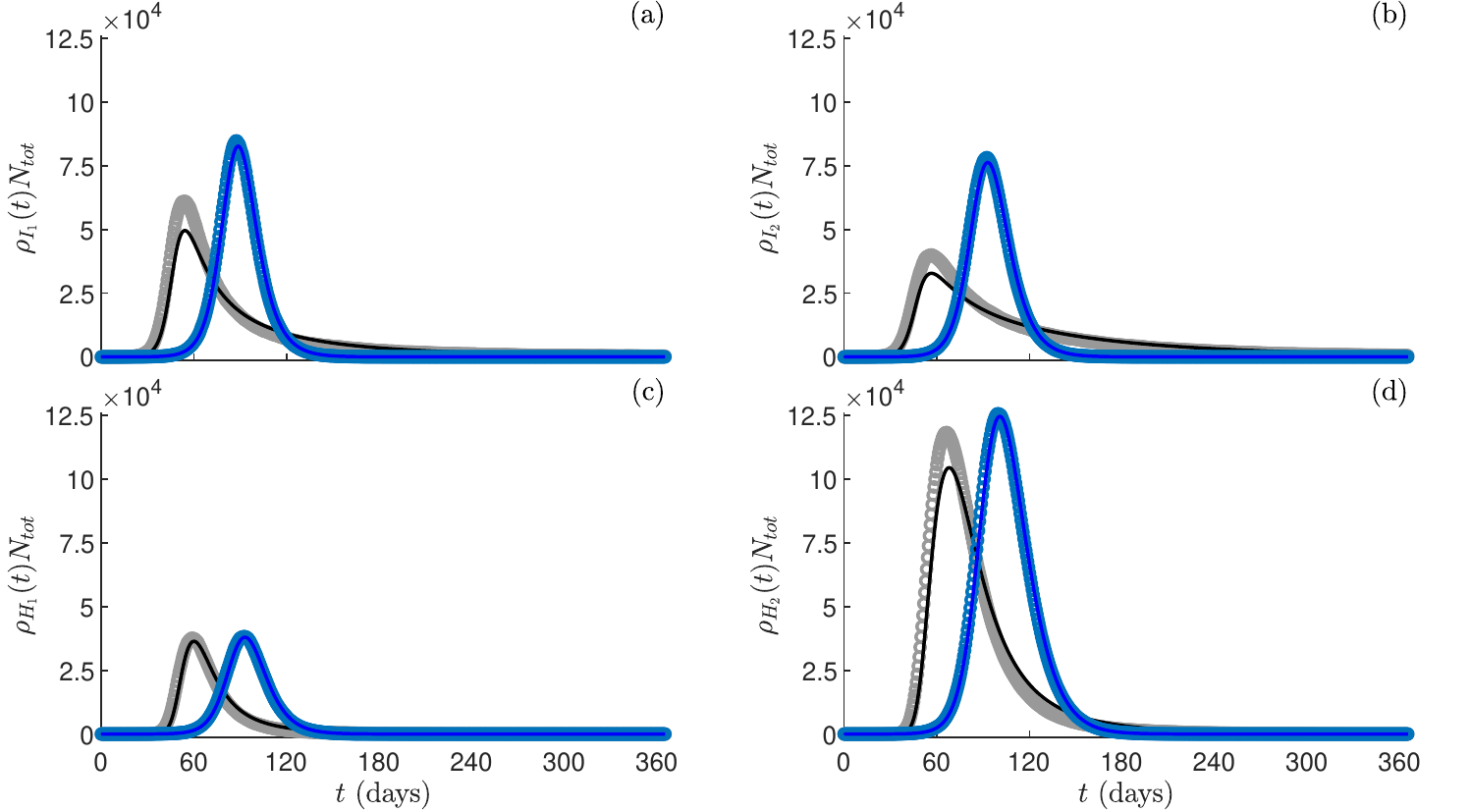}
	\caption{Viral load--dependent  vs. constant isolation control.   Numerical solutions as predicted by model  (\ref{macro_simplified}) (solid lines) and by the particle model  \eqref{stoch_proc} (markers) in scenarios \ref{S1} (grey scale colour) and \ref{S2} (blue scale colour).
		Panel (a): compartment size of infectious individuals with increasing viral load, $I_1$. Panel (b): compartment size of infectious individuals with decreasing viral load, $I_2$. Panel (c): compartment size of isolated individuals with increasing viral load, $H_1$. Panel (d): compartment size of isolated individuals with decreasing viral load, $H_2$.  Initial conditions and other parameters values are given in (\ref{CIvalues}) and Table \ref{TabPar}, respectively.  }\label{fig3}
\end{figure}

\begin{figure}[t]\centering
	\includegraphics[scale=1.1]{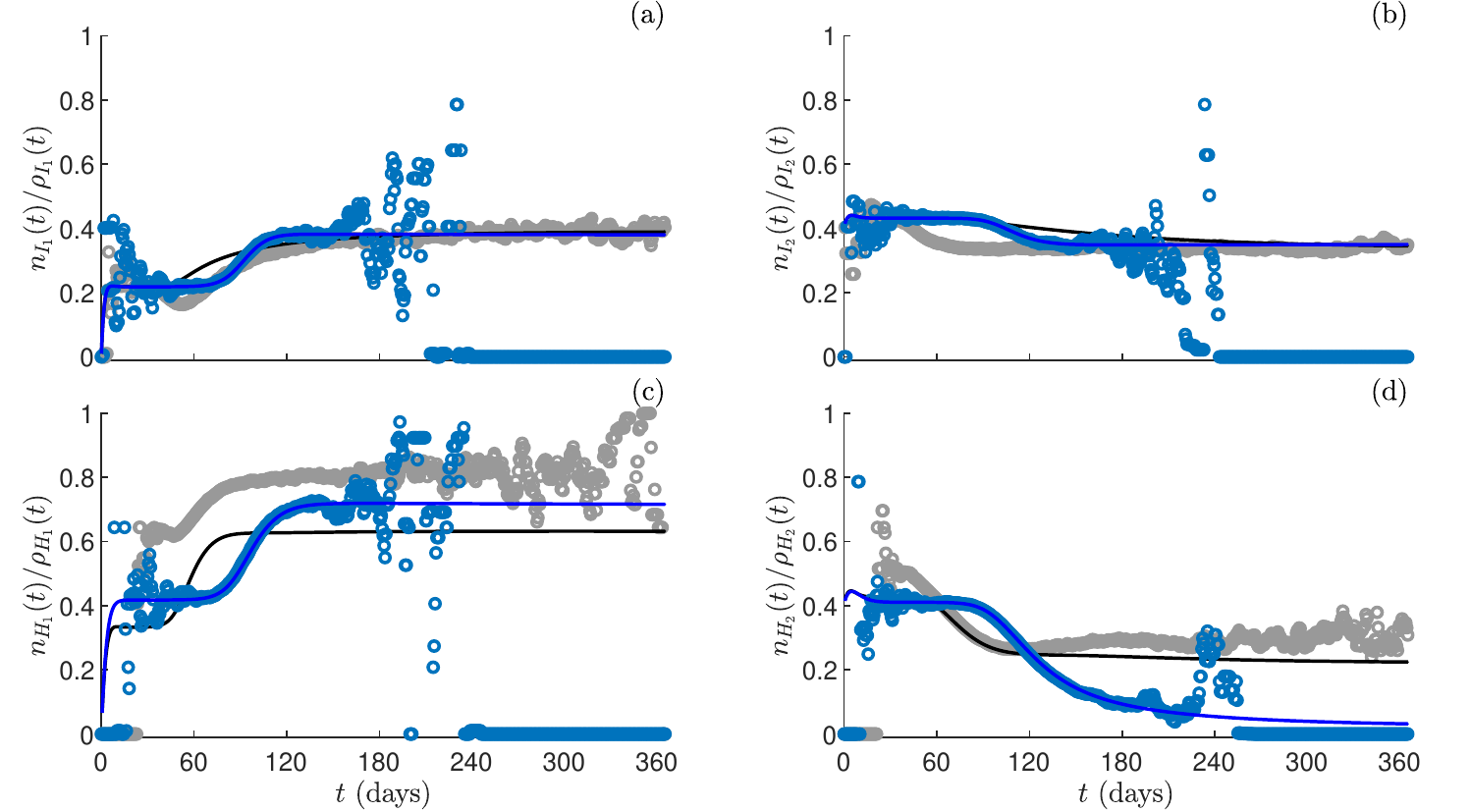}
	\caption{Viral load--dependent  vs. constant isolation control.   Numerical solutions as predicted by model  (\ref{macro_simplified}) (solid lines) and by the particle model  \eqref{stoch_proc} (markers) in scenarios \ref{S1} (grey scale colour) and \ref{S2} (blue scale colour). Panel (a): mean viral load of infectious individuals with increasing viral load, $I_1$. Panel (b): mean viral load  of infectious individuals with decreasing viral load, $I_2$. Panel (c): mean viral load  of isolated individuals with increasing viral load, $H_1$. Panel (d): mean viral load  of isolated individuals with decreasing viral load, $H_2$. Initial conditions and other parameters values are given in Sections \ref{Sec:par}, \ref{Sec:sim}.  }\label{fig4}
\end{figure}

Here, we introduce the isolation control in the epidemic model and assess how the frequency of testing and the viral load sensitivity of tests can affect epidemic dynamics. To this aim, we
compare the following simulation scenarios:
\begin{enumerate}[label=\textbf{S\arabic*},ref=S\arabic*]
	\item \label{S1} viral load--dependent isolation, as studied here: $\lambda_{H_j,I_j}(t)=\lambda_\alpha\rho_{I_j}(t)$, $j=1,2$, and $\alpha_H(v)=\alpha v$;
	\item \label{S2} constant isolation, as in \textit{classical} epidemic models: $\lambda_{H_j,I_j}(t)=\lambda_\alpha$, $j=1,2$, and $\alpha_H(v)=\alpha$.
\end{enumerate}
In order to make the two scenarios properly comparable, we make the following considerations. In the case \ref{S2}, the product $\lambda_\alpha\alpha$ represents the rate at which infectious individuals are isolated in the unit of time. In the case \ref{S1}, in the microscopic model,  the sane rate is given by $\lambda_\alpha\alpha$ multiplied by the individual microscopic viral load $v$; whereas, in the macroscopic model (\ref{macro_simplified}),  due to the monokinetic closure, this rate is given by $\lambda_\alpha\alpha$ multiplied by the momentum $n_{I_j}$, $j=1,2$, related to the corresponding compartment (see  (\ref{lalpha})). Thus, we assume that the value of $\lambda_\alpha\alpha$ in  scenario \ref{S1} (say, $\left.\lambda_\alpha\alpha\right.|_{S1}$) is given by the value adopted in  scenario \ref{S2} ($\left.\lambda_\alpha\alpha\right.|_{S2}$) rescaled by a normalization factor $M$:
\begin{equation*}
\left.\lambda_\alpha\alpha\right.|_{S1}=\dfrac{\left.\lambda_\alpha\alpha\right.|_{S2}}{M},
\end{equation*}
where $M$ represents an average quantity for the $n_{I_j}$'s, $j=1,2$. In order to estimate $M$, we consider model (\ref{macro_simplified})  in absence of isolation control ($\alpha_H\equiv 0$) and denote by $n_{I_1}^{unc}(t)$ and $n_{I_2}^{unc}(t)$ the corresponding solutions for $n_{I_1}(t)$ and $n_{I_2}(t)$, respectively. Then, $M$ is set to
\begin{equation*}
	M=\dfrac{\bar{n}_{I_1}+\bar{n}_{I_2}}{2},
\end{equation*}
where $\bar{n}_{I_j}$ are the average values of  $n_{I_j}^{unc}(t)$ over $[0,t_f]$, namely
$$\bar{n}_{I_j}=\dfrac{1}{t_f}\int_0^{t_f}n_{I_j}^{unc}(t)dt,\quad j=1,2.$$
 The numerical value for  $\left.\lambda_\alpha\alpha\right.|_{S2}$ is set to $\left.\lambda_\alpha\alpha\right.|_{S2}=0.1$ days$^{-1}$. For the Monte Carlo simulation we set $\lambda_\alpha=15$ days$^{-1}$ in scenario \ref{S1} and $\lambda_\alpha=1$ days$^{-1}$ in scenario \ref{S2}.
 
Figs. \ref{fig3} and \ref{fig4} display the numerical simulations in scenarios \ref{S1} (grey scale colour) and \ref{S2} (blue scale colour). Specifically, solid lines refer to the solutions of the macroscopic model (\ref{macro_simplified}) and markers to those of the stochastic particle model \eqref{stoch_proc}. As far as the match between the two approaches is concerned, we note that in the case of constant control \ref{S2}  considerations similar to those made in Section \ref{Sec:unc} apply.  Instead, in the case of viral load--dependent control  \ref{S1}, solutions by particle and macroscopic models are qualitatively similar but quantitatively different. This is an expected result because the derivation of the macroscopic model relies on an approximation through the monokinetic closure (\ref{monokin}), which acts by levelling  the viral loads of all agents belonging to a given class to their average value. However, notwithstanding the \textit{postulated} monokinetic closure, the matching is quite good, as the peak given by the macroscopic model is only mildly underestimated.

As far as the comparison between scenarios \ref{S1} and \ref{S2} is concerned, from Fig. \ref{fig3} we note that in the first case (grey scale colour) the epidemic outbreak occurs earlier and with a lower peak w.r.t. the second case  (blue scale colour), but the \textit{tails} of the infected curves are longer. In order to investigate these  differences more deeply, we consider solutions by the macroscopic model (\ref{macro_simplified}) and report in Table \ref{Tab} some relevant epidemiological quantities, including the value of  infectious prevalence peak and the time it occurs, and the endemic value of infectious prevalence, $(\rho_{I_1}^E+\rho_{I_2}^E)N_{tot}$. 
In scenario \ref{S1}, the endemic components $\left.\rho_{I_1}^E\right|_{S1}$, $\left.\rho_{I_2}^E\right|_{S1}$  are computed through the expressions in (\ref{EEcomp}). In particular, in our numerical set, equation (\ref{n1EE}) admits three positive roots, but just one of them makes all the other endemic components (\ref{EEcomp}) positive, hence a unique endemic equilibrium exists. In scenario \ref{S2}, one can easily verify that the unique endemic equilibrium has
	\begin{align*}
		\left.\rho_{I_1}^E\right|_{S2}&=\dfrac{\lambda _bb  \lambda _{\beta } \nu _{\beta } \left( \lambda _{\alpha }\alpha +\lambda _{\gamma }\nu _1+ \lambda _{\gamma }\nu _2+\lambda _{\mu }\mu  \right)-\lambda _{\mu }\mu   \left(\lambda _{\alpha }\alpha  + \lambda _{\gamma }\nu _1+\lambda _{\mu }\mu  \right) \left(\lambda _{\alpha }\alpha  +\lambda _{\gamma }\nu _2 +\lambda _{\mu }\mu  \right)}{\lambda _{\beta } \nu _{\beta }  \left(\lambda _{\alpha }\alpha  +\lambda _{\gamma }\nu _1 +\lambda _{\mu }\mu  \right)\left(\lambda _{\alpha }\alpha  +\lambda _{\gamma }\nu _1+ \lambda _{\gamma }\nu _2+\lambda _{\mu }\mu  \right)}\\
			\left.\rho_{I_2}^E\right|_{S2}&=\dfrac{\lambda _{\gamma }\nu _1}{ \lambda _{\alpha }\alpha  + \lambda _{\gamma }\nu _2+\lambda _{\mu }\mu  }\left.\rho_{I_1}^E\right|_{S2}.
\end{align*}
We also compute the value at the final time $t_f=1$ year of three cumulative quantities:  the cumulative incidence CI$(t)$, i.e. the total number of new cases in  $[0, t]$; the {cumulative isolated individuals}, i.e. the total number of infectious individuals that tested positive in  $[0, t]$,  and the cumulative deaths CD$(t)$, i.e. the disease--induced deaths in  $[0, t]$.
In our model we have, respectively:
\begin{align*}
	\text{CI}(t)&=N_{tot}\int_{0}^{t}\lambda_\beta\nu_\beta \rho_S(\tau)(\rho_{I_1}(\tau)+\rho_{I_2}(\tau))d\tau,
	\\
    \text{CH}(t)&=N_{tot}\int_{0}^{t}\left(\lambda_{H_1,I_1}(\tau)\alpha_H\left(\dfrac{n_{I_1}(\tau)}{\rho_{I_1}(\tau)}\right)\rho_{I_1}(\tau)+\lambda_{H_2,I_2}(\tau)\alpha_H\left(\dfrac{n_{I_2}(\tau)}{\rho_{I_2}(\tau)}\right)\rho_{I_2}(\tau)\right)d\tau,
    \\
	\text{CD}(t)&=N_{tot}\int_{0}^{t} \lambda_dd(\rho_{H_1}(\tau)+\rho_{H_2}(\tau))d\tau.
\end{align*}
\begin{table}[t!]
	\centering\begin{tabular}{|@{}c||c|c|c|c|c|c@{}|}
		\hline
		Scenario& $\max(\rho_{I_1}+\rho_{I_2})N_{tot}$ &arg$\max(\rho_{I_1}+\rho_{I_2})$ & CI$(t_f)$ &  CH$(t_f)$ & CD$(t_f)$&$(\rho_{I_1}^E+\rho_{I_2}^E)N_{tot}$ \\
		\hhline{|=======|}
		\ref{S1}  & $8.20 \cdot 10^4$ & 55.08 days &$7.87 \cdot 10^5$&$5.14\cdot 10^5$ &$6.33\cdot 10^3$&289.35\\ \hline
		\ref{S2}  & $15.60 \cdot 10^4$ &90.62 days&$7.70 \cdot 10^5$ &$5.13\cdot 10^5$&$6.34\cdot 10^3$&75.92\\
		\hline
	\end{tabular}\caption{Relevant quantities as predicted by model (\ref{macro_simplified}) in the case of viral load--dependent isolation \ref{S1} (first line) and in the case of constant isolation \ref{S2} (second line). First column: infectious prevalence peak, $\max(\rho_{I_1}+\rho_{I_2})N_{tot}$. Second column: time of infectious prevalence peak, arg$\max(\rho_{I_1}+\rho_{I_2})$. Third column: cumulative incidence at $t_f=1$ year,  CI$(t_f)$. Fourth column: cumulative isolated individuals at $t_f=1$ year,  CH$(t_f)$. Fifth column: cumulative deaths at $t_f=1$ year,  CD$(t_f)$. Sixth column: endemic infectious prevalence, $(\rho_{I_1}^E+\rho_{I_2}^E)N_{tot}$.  Initial conditions and other parameters values are given in (\ref{CIvalues}) and Table \ref{TabPar}, respectively.}\label{Tab}
\end{table}
From Table \ref{Tab}, we note that the epidemic peak in scenario \ref{S1}  is almost halved compared to the scenario \ref{S2} and occurs  36 days before. By contrast, the endemic infectious prevalence is
much greater in scenario \ref{S1} w.r.t. \ref{S2}: 289 vs. 76.  Interestingly, the differences in the cumulative quantities CI$(t_f)$, CH$(t_f)$, CD$(t_f)$ are, instead, minimal: in case of viral load--dependent isolation the cumulative incidence at 1 year is  approximately 2\% greater than the corresponding quantity in the case of constant isolation, while the cumulative isolated individuals [resp. deaths] are about 0.2\% greater  [resp. smaller].

The viral load--dependent isolation function reflects the assumption that an infectious individual with  high viral load  is more likely to be identified: it may represent the efficiency of the test that, according to its sensitivity, is capable of detecting different concentrations of virus particles per $ml$ \cite{larremore2021SA}. Assuming a constant isolation function means, instead, that all infectious individuals have the same probability of being detected and diagnosed. 
	In other words,  infectious individuals with sufficiently low [resp. high] viral load have a  probability of being diagnosed  higher [resp. lower] in scenario \ref{S2} with respect to \ref{S1}.

\subsection{Tracking individuals' viral load}
One of the advantages of  a particle  model is the possibility to track the trends of  all the agents of the system. Here, we are interested in tracking the evolution of individuals' viral load during the simulation time span. To this aim, we consider the particle model \eqref{stoch_proc} with viral load--dependent   isolation control (scenario \ref{S1}) and retrieve the viral load evolution of every single agent. In Fig. \ref{fig5}, we report the temporal dynamics of $v$ for five selected agents, who show different courses of the disease.  Different line markers and/or colours refer to the different epidemiological compartments the agents pass through; the meaning is specified in the figure legend. Note that two agents die after having acquired the infection: one of natural causes (first curve from the left), the other one from the disease (second curve from the left). The other three agents survive and finally recover from the infection: two of them are identified and isolated during the infectious period (third and fourth curve from the left), while the last one remains free to move  (fifth curve from the left).  We also remark that individuals may recover before their viral load becomes null and that it may take a long time after recovering  for $v$ to completely vanish. Such a trend  is linked to the choice of a constant probability of recovery, $\gamma=\nu_2$. This is in agreement with experimental observations of viral load curves, that show that individuals are no longer infectious before the complete disappearance of the virus \cite{cdcRubella,Goyal2020,He2020,flu}. Nonetheless, the mathematical description could be refined by setting a $v-$dependent and decreasing probability of recovering, $\gamma(v)$.

\begin{figure}[t]\centering
	\includegraphics[scale=1.1]{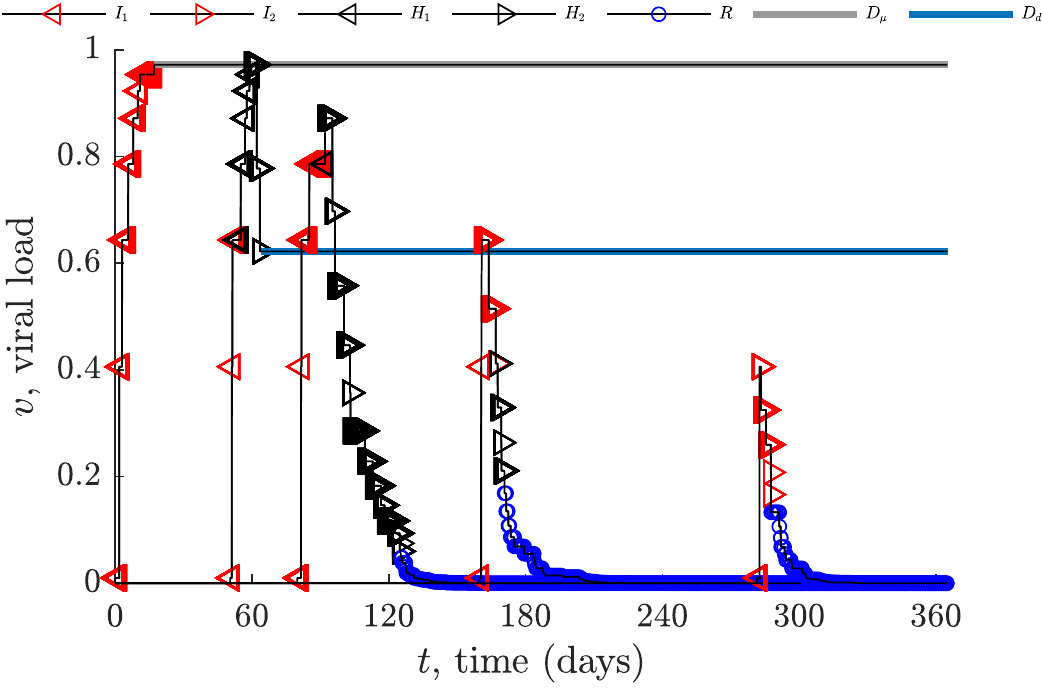}
	\caption{Viral load evolution from the time of infection exposure to the final time $t_f=1$ year for five system agents, as predicted by the stochastic particle model \eqref{stoch_proc} with viral load--dependent   isolation control (scenario \ref{S1}).   Different line markers and/or colours refer to the different epidemiological compartments the agent passes trough; the meaning is specified in the legend.  Initial conditions and other parameters values are given in (\ref{CIvalues}) and Table \ref{TabPar}, respectively. }\label{fig5}
\end{figure}

\section{Discussion and conclusion}\label{sec:6}
In this work, we have proposed a microscopic stochastic model allowing one to describe the spread of an infectious disease through social contacts. Each individual  is identified by the epidemiological compartment to which he/she belongs and by his/her viral load. Binary interactions between susceptible  and infectious individuals may cause the susceptible to acquire a positive viral load $v$ and, as a consequence, to get infected. The viral load progression due to  physiological processes is different from person to person, it determines the health status of the individuals and, therefore, the epidemiological compartment to which they belong. In particular,  we have here considered the case that the viral load influences explicitly the isolation mechanism, i.e. the switch from $I_i$ to $H_i$, $i=1,2$. In this sense, the present work manages to deal with the heterogeneity of the individuals' viral load, that is explicitly encoded in the individual viral load progression and in the probability of being diagnosed.

 We have derived from the stochastic particle model a kinetic description by means of evolution equations of the distribution of the viral load in each compartment.
  Finally, by making use of a monokinetic closure, we have obtained a hydrodynamic description. The ensuing macroscopic model is a system of non--linear ordinary differential equations for the macroscopic densities and viral load momentum of the compartments. 
We have performed a qualitative analysis allowing to state that our system has a unique disease--free equilibrium (DFE) that is globally asymptotically stable if $\mathcal{R}_0<1$ and that the system \eqref{macro_simplified} exhibits a transcritical forward bifurcation at  DFE and $\mathcal{R}_0=1$.

Our numerical tests have confirmed the matching between
the particle and the macroscopic models in the case in which the isolation function $\alpha_H$ is constant, thereby validating the macroscopic model as a reliable approximation of
the particle model more amenable to analytical investigations and quick and accurate numerical solutions. In the case of a viral load--dependent isolation, i.e. $\alpha_H(v)=\alpha v$, along with a density--dependent frequency, we have seen that the qualitative trend of the numerical solutions of the particle \eqref{stoch_proc} and macroscopic \eqref{macro_simplified} models are close but do not coincide exactly: this is a consequence of the approximation made through the monokinetic closure.
While the macroscopic model permits quicker numerical solutions, the particle model allows one to compute more accurately the viral load of the individuals: indeed, when a compartment is almost empty and there are few incoming and outgoing individuals, the statistical averages described by the macroscopic model are not reliable anymore. Moreover, the particle model allows one to track the viral load of every single agent and to investigate different possible individual evolutions of the disease.

Deliberately, we have not tried to match real scenarios by calibrating or comparing the results of our models with empirical data. 
In fact, our aim was first to propose a simple compartmental model including the viral load as microscopic variable. As a consequence, we wanted to explore prototypical scenarios and to compare them to those predicted by classical epidemic models, by focusing on the impact of having a viral load--dependent isolation in place of a constant isolation rate. We have seen that in the case of a viral load--dependent isolation the epidemic outbreak occurs earlier and with a lower peak (almost halved) w.r.t. to the constant isolation case. However, the cumulative disease--related quantities one year after the onset of the epidemic are comparable, while  the endemic infectious prevalence is much greater in the viral load--dependent isolation scenario. This may be explained in terms of the viral load sensitivity and frequency of the testing activities that are embodied in the choice of the functions $\alpha_H(v)$ and $\lambda_{H_1,I_1}(t)$, $\lambda_{H_2,I_2}(t)$, respectively.

 In the proposed framework, the description of the microscopic mechanisms and the heterogeneity of the viral load at the microscopic level allows one to derive a macroscopic model (more amenable, of course, to analytical and numerical investigations), that provides for a richer description of the  disease spreading in the host population. Here we only considered the explicit  influence of the viral load on the isolation mechanism, but, in principle, all the switches of individuals between compartments may depend on the viral load at the microscopic level, and on the viral load momentum at the macroscopic level.  Therefore, more complex situations, such as super--spreading events, that have been proved to be of the utmost importance for example during the  COVID--19 pandemic \cite{Goyal2020}, could be addressed. The heterogeneity of transmission could be included by making the disease transmission rate from infectious  to susceptible individuals dependent on the viral load.  Also, in such a  way, different initial viral loads of the  infectious individual first introduced in the community may give rise to different epidemic scenarios. 

\bigskip
\paragraph*{Acknowledgements} This work was supported by the Italian National Group for the Mathematical Physics (GNFM) of
National Institute for Advanced Mathematics (INdAM) through the grant `Progetto Giovani'. NL postdoctoral fellowship is funded by INdAM.
This work was also supported by Ministry of Education, University and Research through the MIUR grant Dipartimento di Eccellenza 2018-2022, Project no. E11G18000350001, and the Scientific Research Programmes of Relevant National Interest project n. 2017KL4EF3.

\bibliographystyle{abbrv}
\bibliography{DMrLnTa-SIR_viral_load}
\end{document}